\algnewcommand\algorithmicinput{\textbf{Input:}}
\algnewcommand\INPUT{\item[\algorithmicinput]}
\theoremstyle{plain}
\newtheorem{theorem}{Theorem}
\newtheorem{lemma}{Lemma}
\newtheorem{assumption}{Assumption}
\numberwithin{equation}{section}
\begin{document}

\begin{frontmatter}
\title{Accelerate iterated filtering}
\runtitle{Accelerate iterated filtering}

\begin{aug}
\author{\fnms{Dao} \snm{Nguyen}\ead[label=e1]{dxnguyen@olemiss.edu}}

\runauthor{D. Nguyen}

\affiliation{Some University\thanksmark{m1} and Another University\thanksmark{m2}}

\address{Departments of Mathematics,\\
University of Mississippi, Oxford, Mississippi, USA\\
\printead{e1}}

\end{aug}

\begin{abstract}
In simulation-based inferences for partially observed Markov process models (POMP), the by-product of the Monte Carlo 
filtering is an approximation of the log likelihood function. Recently, iterated filtering \cite{ionides06-pnas,ionides11} has originally been
introduced and it has been shown that the gradient of the log likelihood can also be approximated. 
Consequently, different stochastic optimization algorithm can be applied to estimate the parameters of the underlying models. 
As accelerated gradient is an efficient approach in the optimization literature, we show that we 
can accelerate iterated filtering in the same manner and inherit that high convergence rate while relaxing the restricted  
conditions of unbiased gradient approximation.  
We show that this novel algorithm can be applied to both convex and nonconvex log likelihood functions. 
In addition, this approach has substantially outperformed most of other previous approaches in a toy example and in a challenging scientific problem of modeling 
infectious diseases.  
\end{abstract}

\begin{keyword}
\kwd{accelerate iterated filtering}
\kwd{accelerate inexact gradient method}
\kwd{sequential Monte Carlo}
\kwd{state space model}
\kwd{parameter estimation}
\end{keyword}

\end{frontmatter}

\section{Introduction}

The last decade has seen a great increase in the use of simulation-based
inference where numerical approximations are based on either Markov
chain Monte Carlo or sequential Monte Carlo sampling. These approaches
have become popularized, in part, because of the increasing computational
power and the emergence of efficient stochastic optimization algorithms.
On the Bayesian paradigm, particle Markov chain Monte Carlo has been
introduced and popularized by Doucet and collaborators \cite{andrieu10,andrieu2015convergence,pitt2012some}.
Similar ideas have been developed previously \cite{leggetter1995maximum,doucet02,gaetan03,jacquier07}
but in different contexts than simulation-based inferences. On the
frequentist paradigm, \cite{ionides06-pnas,ionides11} have introduced an original
approach to perform simulation-based parameter inference in POMP models
by combining stochastic gradient approximation and particle filtering.
In this paper, we will focus on improving one of the most popular
algorithm of this class, namely, iterated filtering (IF). Iterated
filtering uses an approximation estimate of the gradient of the log likelihood
computed from particle filters while proposing an artificial perturbation
moves to update the parameters. This class of algorithm is attractive
because it enables routine simulation-based parameter inferences in
general POMP model, even in the cases of intractable likelihoods.
Due to some interesting theoretical properties \cite{ionides11,ionides15,nguyen2016another},
its applications range in various fields such as biology, ecology,
economics and engineering \cite{lindstrom12,lindstrom2013tuned,laneri10,bhadra10,breto11,breto2014idiosyncratic}.

Iterated filtering was later theoretically developed by \citet{ionides11}.
Recently, \citet{lindstrom12} extended it to improve on numerical
performance while \citet{doucet2013derivative} expanded it to include
filtering/smoothing with quite attractive theoretical properties.
\citet{ionides15} generalized \citet{lindstrom12}'s approach and
combined the idea with data cloning \citep{lele07}, developing a Bayes
map iterated filtering with an entirely different theoretical approach.
\citet{nguyenis215} revisited the approach of \citet{doucet2013derivative},
using a different perturbation noise and computed both the gradient
and the Hessian. Similar to intractable likelihood in the context
of iterated filtering, \citet{Poyiadjis-etal:2009, nemeth2014sequential, doucet2013derivative}
showed that the gradient and Hessian information can also be computed from
particle filter. In the same line, manifold Langevin Monte Carlo (mMALA)
\cite{girolami2011riemann} exploits the Hessian information to simplify
the tedious tuning method while improving on convergence rate. However, this 
relies on rather strong assumptions that the gradient, and Hessian
information of transition density and observation density can be sampled
from. This is quite unrealistic in many real world applications. We,
therefore, followed the formal approaches, based solely on very weak
assumptions of being able to sample from transition density and evaluate
from observation density. Motivated from the fact that the gradient and
Hessian information can be approximated using the first and the second moments
\cite{ionides11, doucet2013derivative}, we propose to use such approximations
in the context of accelerate iterated filtering. Ionides uses score
vector merely while Doucet includes the Hessian information for
the independent white noise, which is not quite useful in the context
of iterated filtering with natural random walk noise. \citet{nguyenis215}
proposed to approximate the gradient and Hessian using random walk noise
to efficiently explore the mode of the likelihood. Other than exploiting
approximations of the Hessian under weak assumption, we chose an alternative
approach. That is, we apply the accelerate gradient approach to the
approximation of the gradient of the log likelihood for an effective
estimation approach.

The key contributions of this paper are three folds. Firstly, we developed
and showed that accelerate iterated filtering algorithm converges using
a general non-increasing step size with bias approximation of the
gradient. It is simple, elegant, and generalizable to faster algorithms.
Secondly, we proved that it has a higher convergence rate in general
convex and non-convex conditions of the objective log likelihood. Finally,
we showed substantial improvements of the method on a toy problem and
on a real world challenge problem of vivax malaria model compared
to previous simulation-based inference approaches.

The paper is organized as follows. In the next section we introduce
some notations and we develop the framework of accelerate iterated
filtering. In Sections \ref{sec:AIF}, we state the convergence of
this approximation method to the true maximum likelihood estimation
by iterating and accelerating noisy gradient of the log likelihood.
We validate the proposed methodology by a toy example and a challenging
inference problem of fitting a malaria transmission model to time
series data in Section \ref{sec:5Experiments}, showing substantial
gains for our methods over current alternatives. We conclude in Section
\ref{sec:5Conclusion} with the suggesting of the future works to
be extended. The proofs are postponed to the Appendix.


\section{Background of simulation-based inferences}

We are interested in a general latent variable model since this is an ubiquitous
model for applied sciences. Let $\mathcal{X}$ be a latent state space
with a density $q_{\theta}(x)$ parameterized by $\theta\in\Theta=\mathbb{R}^{d}$,
and let $\mathcal{Y}$ be an observation space equipped with a conditional
density $f_{\theta}(y|x)$. The observation $y\in\mathcal{Y}$ are
considered as fixed and we write the log-likelihood function of
the data $\ell(\theta)\overset{\triangle}{=}\log\int q_{\theta}(x)f_{\theta}(y|x)dx$.
We work with the maximum likelihood estimator, $\hat{\theta}=\arg\max\ell(\theta)$
where $\ell(\theta)$ is intractable but $f_{\theta}(y|x)$ can be
evaluated, by using samples where $f_{\theta}(y|x)$
is also intractable. This process often uses the first order stochastic
approximation \cite{kushner78}, which involves a Monte Carlo approximation
to a difference equation, $\theta_{m}=\theta_{m-1}+\gamma_{m}\nabla\ell(\theta_{m-1}),$
where $\theta_{0}\in\Theta$ is an arbitrary initial estimate and
$\{\gamma_{m}\}_{m\geq1}$ is a sequence of step sizes with ${\sum_{m\geq1}\gamma_{m}=\infty}$
and ${\sum_{m\geq1}\gamma_{m}^{2}<\infty}$. The algorithm converges
to a local maximum of $\ell(\theta)$ under regularity conditions.
The term $\nabla\ell(\theta)$, also called the score function, is shorthand
for the $\mathbb{R}^{d}$-valued vector of partial derivatives, $\nabla\ell(\theta)=\frac{\partial\ell(\theta)}{\partial\theta}$. 

Sequential Monte Carlo (SMC) approaches have previously been developed
to estimate the score function \citep{Poyiadjis-etal:2009,nemeth2013particle,DahlinLindstenSchon2015a}.
However, under the simulation-based setting, which does not require the
ability to evaluate transition densities and their derivatives, these
approaches are not applicable. As a result, \cite{ionides11}, \cite{doucet2013derivative}
used an artificial dynamics approach to estimate the derivatives. Specifically, 
\cite{nguyenis215} considers a parametric model consisting of a
density $p_{Y}(y;\theta)$ with the log-likelihood of the data $y^{*}\in\mathcal{Y}$
given by $\ell(\theta)=\log p_{Y}(y^{*};\theta)$. A stochastically
perturbed model corresponding to a pair of random variables $(\breve{\Theta},\breve{Y})$
having a joint probability density on $\mathbb{R}^{d}\times\mathcal{Y}$
can be defined as $p_{\breve{\Theta},\breve{Y}}(\breve{\vartheta},\ y;\theta,\ \tau)=\tau^{-d}\kappa\left\{ \tau^{-1}(\breve{\vartheta}-\theta)\right\} p_{Y}(y;\breve{\vartheta}).$ 
Suppose the following regularity conditions, identical to the assumptions
of \cite{doucet2013derivative}:
\begin{assumption}\label{ass1} There exists $C<\infty$ such that
for any integer $k\geq1,1\leq i_{1},\ \ldots,\ i_{k}\leq d$ and $\beta_{1},\ \ldots,\ \beta_{k}\geq1$,
$\int\left|u_{i_{1}}^{\beta_{1}}u_{i_{2}}^{\beta_{2}}\cdots u_{i_{k}}^{\beta_{k}}\right|\kappa(u)\ du\leq C,$
where $\kappa$ is a symmetric probability density on $\mathbb{R}^{d}$
with respect to Lebesgue measure and $\Sigma=(\sigma_{i,j})_{i,j=1}^{d}$
is the non-singular covariance matrix associated to $\kappa$. \end{assumption}
\begin{assumption}\label{ass2} There exist $\gamma,\ \delta,\ M>0,$
such that for all $u\in\mathbb{R}^{d}$, $|u|>M\Rightarrow\kappa(u)<e^{-\gamma|u|^{\delta}}.$
\end{assumption}
\begin{assumption}\label{ass3} $\ell$ is four times continuously
differentiable and $\delta$ defined as in Assumption \ref{ass2}.
For all $\theta\in\mathbb{R}^{d}$, there exists $0<\eta<\delta,\ \epsilon,\ D>0,$
such that for all $u\in\mathbb{R}^{d}$, $\mathcal{L}(\theta+u)\leq De^{\epsilon|u|^{\eta}},$
where $\mathcal{L}$ : $\mathbb{R}^{d}\rightarrow\mathbb{R}$ is the
associated likelihood function $\mathcal{L}=\exp\ell$. \end{assumption}
Under these regularity assumptions, \cite{doucet2013derivative}
show that 
\begin{equation}
\left|\tau^{-2}\Sigma^{-1}{\mathbb{E}}\left(\breve{\Theta}-\theta\left|\breve{Y}=y^{*}\right.\right)-\nabla\ell\left(\theta\right)\right|<C\tau^{2}.\label{eq:4.1}
\end{equation}
These approximations are useful for latent variable models, where
the log-likelihood of the model consists of marginalizing over a latent
variable, $X$, 
$$\ell(\theta)=\log\int{p_{{X},{Y}}(x,y^{*};\theta)\, dx}.$$
In this case, the expectations in equation \ref{eq:4.1} can be approximated
by Monte Carlo importance sampling, as proposed by \cite{ionides11}
and \cite{doucet2013derivative}. 
In \cite{nguyenis215}, the POMP model is a specific latent variable
model with ${X}=X_{0:N}$ and ${Y}=Y_{1:N}$. A perturbed POMP model
is defined to have a similar construction to our perturbed latent
variable model with ${\breve{X}}=\breve{X}_{0:N}$, $\breve{Y}=\breve{Y}_{1:N}$
and $\breve{\Theta}=\breve{\Theta}_{0:N}$. \citep{ionides11} perturbed
the parameters by setting $\breve{\Theta}_{0:N}$ to be a random walk
starting at $\theta$, whereas \citep{doucet2013derivative} took
$\breve{\Theta}_{0:N}$ to be independent additive white noise perturbations
of $\theta$. We take advantage of the asymptotic developments of
\citep{doucet2013derivative} while maintaining some practical advantages
of random walk perturbations for finite computations, so we use the
construct $\breve{\Theta}_{0:N}$ as in \cite{nguyenis215} as follows.

Let $Z_{0},\ \ldots,\ Z_{N}$ be $N+1$ independent draws from a density
$\psi$. \cite{nguyenis215} introduces $N+2$ perturbation parameters,
$\tau$ and $\tau_{0},\ldots,\tau_{N}$, and construct a process $\breve{\Theta}_{0:N}$
by setting ${\breve{\Theta}_{n}=\theta+\tau\sum_{i=0}^{n}\tau_{i}Z_{i}}$ for $0\le n\le N$. 
We later consider a limit where $\tau_{0:N}$
as fixed and the scale factor $\tau$ decreases toward zero, and subsequently
another limit where $\tau_{0}$ is fixed but $\tau_{1:N}$ decrease
toward zero together with $\tau$. Let $p_{\breve{\Theta}_{0:N}}(\breve{\vartheta}_{0:N};\theta,\ \tau,\ \tau_{0:N})$
be the probability density of $\breve{\Theta}_{0:N}$. We define
the artificial random variables $\breve{\Theta}_{0:N}$ via their
density, 
\begin{multline*}
p_{\breve{\Theta}_{0:N}}(\breve{\vartheta}_{0:N};\theta,\ \tau,\ \tau_{0:N})=\\
(\tau\tau_{0})^{-d}\psi\left\{ (\tau\tau_{0})^{-1}(\breve{\vartheta_{0}}-\theta)\right\}\times\prod_{n=1}^{N}(\tau\tau_{n})^{-d}\psi\left\{ (\tau\tau_{n})^{-1}(\breve{\vartheta}_{t}-\breve{\vartheta}_{t-1})\right\}.
\end{multline*}
We define the stochastically perturbed model with a Markov process
$\{(\breve{X}_{n},\breve{\Theta}_{n}),\ 0\leq n\leq N\}$, observation
process $\breve{Y}_{1:N}$ and parameter $(\theta,\ \tau,\ \tau_{0:N})$
by the factorization of their joint probability density 
\begin{multline*}
p_{\breve{X}_{0:N},\breve{Y}_{1:N},\breve{\Theta}_{0:N}}(x_{0:N},y_{1:N},\breve{\vartheta}_{0:N};\theta,\ \tau,\ \tau_{0:N})\\
=p_{\breve{\Theta}_{0:N}}(\breve{\vartheta}_{0:N};\theta,\ \tau,\ \tau_{0:N})p_{\breve{X}_{0:N},\breve{Y}_{1:N}|\breve{\Theta}_{0:N}}(x_{0:N},\ y_{1:N}|\breve{\vartheta}_{0:N}),
\end{multline*}
where 
\begin{multline*}
p_{\breve{X}_{0:N},\breve{Y}_{1:N}|\breve{\Theta}_{0:N}}(x_{0:N},y_{1:N}|\breve{\vartheta}_{0:N};\theta,\tau,\ \tau_{0:N})=\\
\mu(x_{0};\breve{\vartheta}_{0})\prod_{n=1}^{N}f_{n}(x_{n}|x_{n-1};\breve{\vartheta}_{n})\prod_{n=1}^{N}g_{n}(y_{n}|x_{n};\breve{\vartheta}_{n}). 
\end{multline*}
This extended model can be used to define a perturbed parameter log-likelihood
function, defined as 
\begin{equation}
\breve{\ell}(\breve{\vartheta}_{0:N})=\log p_{\breve{Y}_{1:N}|\breve{\Theta}_{0:N}}(y_{1:N}^{*}|\breve{\vartheta}_{0:N};\theta,\tau,\tau_{0:N}).\label{eq:eloglik-1}
\end{equation}
Here, the right hand side does
not depend on $\theta$, $\tau$ or $\tau_{0:N}$. We have designed
(\ref{eq:eloglik-1}) so that, setting $\breve{\vartheta}^{[N+1]}=(\theta,\theta,\dots,\theta)\in\mathbb{R}^{d(N+1)},$ the log-likelihood of the unperturbed model can be written as 
$\ell(\theta)=\breve{\ell}(\breve{\vartheta}^{[N+1]}).$
For the perturbed likelihood, we need an additional assumption of
the extended version.
\begin{assumption}\label{ass5-1} $\breve{\ell}$ is four times continuously
differentiable. For all $\theta\in\mathbb{R}^{d}$, there exist $\epsilon>0$,
$D>0$ and $\delta$ defined as in Assumption \ref{ass2}, such that
for all $0<\eta<\delta$ and $u_{0:N}\in\mathbb{R}^{d(N+1)}$,$\breve{\mathcal{L}}(\breve{\vartheta}^{[N+1]}+u_{0:N})\leq De^{\epsilon\sum_{n=1}^{N}|u_{n}|^{\eta}},$
where $\breve{\mathcal{L}}(\breve{\vartheta}_{0:N})=\exp\{\breve{\ell}(\breve{\vartheta}_{0:N})\}$
is the perturbed likelihood. \end{assumption} Let $\breve{\mathbb{E}}_{\theta,\tau,\tau_{0:N}}$,
${\mathrm{\breve{C}ov}}_{\theta,\tau,\tau_{0:N}}$, ${\mathrm{\breve{V}ar}}_{\theta,\tau,\tau_{0:N}}$
denote the expectation, covariance and variance with respect to the
associated posterior, $p_{\breve{\Theta}_{0:N}|\breve{Y}_{1:N}}(\breve{\vartheta}_{0:N}|y_{1:N}^{*};\theta,\ \tau,\tau_{0:N}).$
By using $\breve{\mathbb{E}}$, ${\mathrm{\breve{C}ov}}$, ${\mathrm{\breve{V}ar}}$
instead of $\breve{\mathbb{E}}_{\theta,\tau,\tau_{0:N}}$, ${\mathrm{\breve{C}ov}}_{\theta,\tau,\tau_{0:N}}$,
${\mathrm{\breve{V}ar}}_{\theta,\tau,\tau_{0:N}}$ respectively, a
theorem similar to theorem 4 of \cite{doucet2013derivative} but
for random walk noise instead of independent white noise is derived.
\begin{theorem}  \label{thm1-1} [Theorem 2 of \cite{nguyenis215}] Suppose Assumptions \ref{ass1}, \ref{ass2}
and \ref{ass5-1}, there exists a constant $C$ independent of $\tau,\tau_{1},...\tau_{N}$
such that, 
\[
\left|\nabla\ell\left(\theta\right)-\tau^{-2}\Psi^{-1}\left\{ \tau_{0}^{-2}\breve{\mathbb{E}}\left(\breve{\Theta}_{0}-\theta|\breve{Y}_{1:N}=y_{1:N}^{*}\right)\right\} \right|<C\tau^{2},
\]
where $\Psi$ is the non-singular covariance matrix associated to
$\psi$. \end{theorem}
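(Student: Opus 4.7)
The plan is to compute the conditional expectation directly as a ratio of integrals, substitute variables to expose the perturbation scale, and then Taylor-expand the perturbed log-likelihood around the unperturbed parameter. By Bayes' rule,
\[
\breve{\mathbb{E}}\bigl(\breve{\Theta}_0 - \theta \,\big|\, \breve{Y}_{1:N} = y_{1:N}^*\bigr)
= \frac{\int (\breve{\vartheta}_0 - \theta)\, p_{\breve{\Theta}_{0:N}}(\breve{\vartheta}_{0:N};\theta,\tau,\tau_{0:N})\, \breve{\mathcal{L}}(\breve{\vartheta}_{0:N})\, d\breve{\vartheta}_{0:N}}{\int p_{\breve{\Theta}_{0:N}}(\breve{\vartheta}_{0:N};\theta,\tau,\tau_{0:N})\, \breve{\mathcal{L}}(\breve{\vartheta}_{0:N})\, d\breve{\vartheta}_{0:N}}.
\]
I would change variables to the innovations $u_0,\ldots,u_N$ defined by $\breve{\vartheta}_0 = \theta + \tau\tau_0 u_0$ and $\breve{\vartheta}_n - \breve{\vartheta}_{n-1} = \tau\tau_n u_n$ for $n\geq 1$. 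The prior factors as $\prod_n \psi(u_n)$, and $\breve{\vartheta}_n - \theta = \tau w_n$ with $w_n = \sum_{i=0}^n \tau_i u_i$. The conditional expectation becomes $\tau\tau_0$ times the ratio of $\int u_0 \breve{\mathcal{L}}(\theta + \tau w_{0:N}) \prod_n \psi(u_n)\, du_{0:N}$ to the corresponding normalizer.

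Next I would Taylor-expand $\breve{\ell}$ around $\breve{\vartheta}^{[N+1]}$ to fourth order. Writing $\breve{\mathcal{L}}(\theta+\tau w_{0:N})/\mathcal{L}(\theta) = 1 + \tau A + \tau^2 R_2(u) + O(\tau^3)$ where $A = \sum_n \partial_n \breve{\ell}(\breve{\vartheta}^{[N+1]})\cdot w_n$, the key algebraic step is to re-index via $A = \sum_i \tau_i u_i S_i$ with $S_i = \sum_{n=i}^N \partial_n \breve{\ell}(\breve{\vartheta}^{[N+1]})$, and to observe that differentiating the identity $\ell(\theta) = \breve{\ell}(\breve{\vartheta}^{[N+1]})$ through the chain rule yields $S_0 = \sum_n \partial_n \breve{\ell}(\breve{\vartheta}^{[N+1]}) = \nabla\ell(\theta)$. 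Symmetry of $\psi$ now does the work: $\int u_0 \prod_n \psi\, du = 0$, the cross terms $\int u_0 u_i \psi(u_0)\psi(u_i)\, du$ vanish for $i\neq 0$, and every term in $\int u_0\, R_2(u) \prod_n \psi\, du$ has an odd power of some $u_i$ and thus vanishes as well. What survives at first order in $\tau$ is precisely $\tau\tau_0 \Psi S_0$, while the denominator equals $1 + O(\tau^2)$ by the same symmetry. Dividing and multiplying by $\tau\tau_0$ produces $\breve{\mathbb{E}}(\breve{\Theta}_0 - \theta \mid y_{1:N}^*) = \tau^2\tau_0^2 \Psi \nabla\ell(\theta) + O(\tau^4)$, and applying $\tau^{-2}\tau_0^{-2}\Psi^{-1}$ gives the claimed bound.

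The main obstacle is not the symmetry bookkeeping but the uniform control of the Taylor remainder, since the integrand is not compactly supported and $\breve{\ell}$ is only assumed to be four times differentiable with exponential growth on the likelihood scale. This is where the three assumptions must combine: on a bulk region $|u| \leq M\tau^{-1/2}$ the Taylor remainder is polynomial in $u$ and its integral against $\prod_n \psi$ is finite by the moment bound in Assumption \ref{ass1}; on the tails, Assumption \ref{ass5-1} bounds $\breve{\mathcal{L}}(\theta + \tau w_{0:N})$ by $D\exp(\epsilon \sum_n |\tau w_n|^\eta)$, which is dominated by the sub-exponential decay of $\psi$ from Assumption \ref{ass2} once $\tau$ is small enough, so the tail contribution is $o(\tau^k)$ for any $k$. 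A Laplace-type estimate then delivers a constant $C$ independent of $\tau$ and of $\tau_1,\ldots,\tau_N$ (which enter only through bounded products $\tau_i\tau_j$ in the remainder), completing the argument in the same spirit as Theorem 4 of \cite{doucet2013derivative} but adapted to the random-walk perturbation structure.
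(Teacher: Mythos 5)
The paper itself does not prove Theorem~\ref{thm1-1} --- it is quoted as Theorem~2 of \cite{nguyenis215}, the random-walk analogue of Theorem~4 of \cite{doucet2013derivative} --- and your reconstruction follows exactly that standard argument (innovation change of variables, Taylor expansion of $\breve{\mathcal{L}}$, the chain-rule identity $\sum_{n}\partial_n\breve{\ell}(\breve{\vartheta}^{[N+1]})=\nabla\ell(\theta)$, symmetry of $\psi$ annihilating the odd moments, and the bulk/tail split via Assumptions~\ref{ass1}, \ref{ass2} and \ref{ass5-1} to control the remainder); it is correct. The only point worth making explicit is that independence of $C$ from $\tau_{1},\dots,\tau_{N}$ requires these scales to be bounded, which your parenthetical about ``bounded products $\tau_i\tau_j$'' in the remainder tacitly assumes.
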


Theorem \ref{thm1-1} formally allows an approximation of $\nabla{\ell}\left(\theta\right)$.
\cite{nguyenis215} also presents an alternative variations on these
results which lead to more stable Monte Carlo estimation. \begin{theorem}
\label{thm3}[Theorem 3 of \cite{nguyenis215}] Suppose Assumption
\ref{ass1}, \ref{ass2} and \ref{ass5-1} hold. In addition, assume
that $\tau_{n}=O(\tau^{2})$ for all $n=1\ldots N$, the following
holds true, 
\begin{equation}
\left|\nabla\ell\left(\theta\right)-\frac{1}{N+1}\tau^{-2}\tau_{0}^{-2}\Psi^{-1}\sum_{n=0}^{N}\left\{ \breve{\mathbb{E}}\left(\breve{\Theta}_{n}-\theta|\breve{Y}_{1:N}=y_{1:N}^{*}\right)\right\} \right|=O(\tau^{2}).
\end{equation}
\end{theorem}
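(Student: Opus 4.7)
The plan is to reduce Theorem \ref{thm3} to Theorem \ref{thm1-1} by exploiting the random-walk construction and the assumption $\tau_n = O(\tau^2)$ for $n \geq 1$ to show that averaging over the posterior means of $\breve{\Theta}_n$, $n \geq 1$, introduces only a negligible correction relative to the $n=0$ term already handled by Theorem \ref{thm1-1}.

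First I would use $\breve{\Theta}_n = \theta + \tau\sum_{i=0}^n \tau_i Z_i$ together with the identity $\breve{\Theta}_n - \theta = (\breve{\Theta}_0 - \theta) + \tau\sum_{i=1}^n \tau_i Z_i$, summed over $n = 0,\ldots,N$, to obtain
\begin{equation*}
\frac{1}{N+1}\sum_{n=0}^N \breve{\mathbb{E}}(\breve{\Theta}_n - \theta \mid \breve{Y}_{1:N}=y_{1:N}^*) = \breve{\mathbb{E}}(\breve{\Theta}_0 - \theta \mid \breve{Y}) + \frac{\tau}{N+1}\sum_{i=1}^N (N+1-i)\,\tau_i\, \breve{\mathbb{E}}(Z_i \mid \breve{Y}).
\end{equation*}
After multiplying by $\tau^{-2}\tau_0^{-2}\Psi^{-1}$, Theorem \ref{thm1-1} rewrites the first summand on the right as $\nabla\ell(\theta) + O(\tau^2)$, so the proof is reduced to controlling the remaining correction.

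The core step is to establish $\breve{\mathbb{E}}(Z_i \mid \breve{Y}) = O(\tau\tau_i)$ uniformly in $i \geq 1$. Since $Z_i$ enters the posterior only through its effect on $\breve{\Theta}_i,\ldots,\breve{\Theta}_N$, each with scaling coefficient $\tau\tau_i$, I would Taylor-expand $\log \breve{\mathcal L}(\breve\vartheta_{0:N})$ around $\breve\vartheta^{[N+1]} = (\theta,\ldots,\theta)$ in both the numerator and denominator of the ratio defining the posterior mean of $Z_i$. The symmetry of $\psi$ in Assumption \ref{ass1} annihilates the $O(1)$ term, the linear-in-$Z_i$ contribution is then of order $\tau\tau_i$, and higher-order remainders are bounded uniformly by combining the moment bounds of Assumption \ref{ass1}, the tail decay of $\psi$ from Assumption \ref{ass2}, and the growth control on $\breve{\mathcal L}$ in Assumption \ref{ass5-1}. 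This is precisely the mechanism underlying the proof of Theorem \ref{thm1-1}, specialized to the ``weak'' increment $Z_i$. I expect this step to be the main obstacle: because $Z_i$ couples simultaneously to every coordinate $\breve\vartheta_n$ with $n \geq i$, the Taylor remainders must be shown to be bounded uniformly in $i$ and in the fixed horizon $N$, which requires careful bookkeeping analogous to the proof of Theorem \ref{thm1-1} but generalized to all summands.

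Given $\breve{\mathbb{E}}(Z_i \mid \breve{Y}) = O(\tau\tau_i)$, each term in the correction above is $\tau \cdot O(\tau^2) \cdot O(\tau^3) = O(\tau^6)$, and after multiplying by $\tau^{-2}\tau_0^{-2}\Psi^{-1}$ the whole correction is at most $O(\tau^4)$. Combined with the $O(\tau^2)$ bound supplied by Theorem \ref{thm1-1} for the $n=0$ term, this yields the claimed order $O(\tau^2)$, completing the argument.
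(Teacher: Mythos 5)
The paper does not actually prove this statement: it is imported verbatim as Theorem~3 of \cite{nguyenis215}, and the Appendix only proves the AIG convergence results. So there is no in-paper proof to compare against, and your proposal has to be judged on its own merits.

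Your reduction is sound and is the natural route. The algebraic identity
\begin{equation*}
\frac{1}{N+1}\sum_{n=0}^{N}\bigl(\breve{\Theta}_{n}-\theta\bigr)
=\bigl(\breve{\Theta}_{0}-\theta\bigr)+\frac{\tau}{N+1}\sum_{i=1}^{N}(N+1-i)\,\tau_{i}Z_{i}
\end{equation*}
is correct, Theorem~\ref{thm1-1} disposes of the first term, and your order counting for the correction is right: with $\tau_{i}=O(\tau^{2})$ and $\breve{\mathbb{E}}(Z_{i}\mid\breve{Y})=O(\tau\tau_{i})$ each summand contributes $O(\tau^{4})$ after the $\tau^{-2}\tau_{0}^{-2}$ rescaling, well inside the $O(\tau^{2})$ budget. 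Note that you have more slack than you use: since the coefficient multiplying $\breve{\mathbb{E}}(Z_{i}\mid\breve{Y})$ is already $O(\tau^{-1}\tau_{i})=O(\tau)$ and $N$ is fixed, the weaker bound $\breve{\mathbb{E}}(Z_{i}\mid\breve{Y})=O(\tau)$ would suffice.

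The one genuine gap is that the crux --- the uniform bound $\breve{\mathbb{E}}(Z_{i}\mid\breve{Y})=O(\tau\tau_{i})$ --- is asserted by analogy rather than proved. Your sketch of the mechanism is the right one (symmetry of $\psi$ kills the zeroth-order term in the Taylor expansion of the posterior tilt; the linear term scales with the perturbation size $\tau\tau_{i}$ of $Z_{i}$; the remainders are controlled by the moment bounds of Assumption~\ref{ass1}, the tails in Assumption~\ref{ass2}, and the growth condition in Assumption~\ref{ass5-1}), and it is consistent with the $i=0$ case, which is exactly the content of Theorem~\ref{thm1-1}. But since $Z_{i}$ enters the likelihood through all coordinates $\breve{\vartheta}_{n}$ with $n\geq i$, the expansion and the uniformity of the remainder bounds do need to be written out rather than inferred; as it stands this step carries the entire analytic content of the theorem and is deferred to a proof (of Theorem~\ref{thm1-1}) that this paper also does not supply.
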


These theorems are useful for our approaches because we can approximate
the gradient of the log-likelihood of the extended model to the second
order of $\tau$ which we will later show that it fits well with our
accelerate simulation based setup.

\section{Proposed accelerate iterated filtering} \label{sec:AIF} 
Our motivation comes from the accelerated gradient method for smooth non-linear
stochastic programming literature. By using an approximation of the
score function, it is possible to use an accelerated gradient method
as in Nesterov acceleration scheme in optimization literature. One issue
with the accelerated gradient approach is that it is not clear how
the technique can be used in situations where both the likelihood
and the gradient are intractable. These sorts of examples are common
in scientific applications of state space models where the state process
is a diffusion process or an ordinary differential equation (ODE)
with stochastic coefficients. However, in these family
of iterated filtering approaches, the score function can be approximated
with noise under control without affecting the convergence rate. Specifically,
applying an accelerated inexact gradient algorithm in the iterated
filtering approach can obtain an optimal rate of convergence.

In this paper, $\epsilon_{k}$ denotes the error in the approximation
of the gradient. Using the same notation as \cite{ghadimi2016accelerated},
denote the sequences of magnitudes of the errors in the gradient approximations
$\left\{ \left\Vert \epsilon_{k}\right\Vert \right\} $. Suppose the
following assumptions: \begin{assumption}\label{ass6} The function
$\ell$ : $\Theta\rightarrow\mathbb{R}$ is differentiable, bounded
from above and has a L-Lipschitz-continuous gradient, i.e. for all
$\theta,\ \vartheta\in\Theta$, $\left\Vert \nabla\ell(\theta)-\nabla\ell(\vartheta)\right\Vert \leq L\left\Vert \theta-\vartheta\right\Vert ,$
where $\nabla\ell$ denotes the gradient of $\ell$. The function
$\ell$ attains its maximum at a certain $\theta^{*}\in\Theta$. \end{assumption}
In the sequel, $\Theta$ denotes a finite-dimensional Euclidean space
with norm $\left\Vert \cdot\right\Vert $ and inner product $\left\langle \cdot,\cdot\right\rangle $.
It can be shown that (e.g. in \cite{nesterov2005smooth}) Assumption \ref{ass6}
is equivalent to 
\begin{equation}
\left|\ell(\vartheta)-\ell(\theta)-\left\langle \nabla\ell(\theta),\mathrm{\vartheta}-\theta\right\rangle \right|\leq\frac{L}{2}\left\Vert \vartheta-\theta\right\Vert ^{2},\;{\displaystyle \forall\theta,\ \vartheta\in\Theta}\label{eq:2.1}
\end{equation}
It is well-known that the gradient descent method converges for a
general non-convex optimization problem but it does not achieve the
optimal rate of convergence, in terms of the functional optimality
gap, when $\ell(\cdot)$ is convex \cite{ghadimi2016accelerated}.
In contrast, the accelerated gradient method in \cite{nesterov2013introductory}
is optimal for solving convex optimization problems, but does not
necessarily converge for solving nonconvex optimization problems.
\cite{ghadimi2016accelerated} proposed a modified accelerated gradient
method which can converge in both convex and non-convex optimization
problem. However, they assumed unbiased estimation of the gradient
which is not satisfied for most simulation-based inferences. Below,
we extend the approach of Ghadimi to an accelerated inexact gradient
(AIG) method in the context of accelerate iterated filtering. That is, we allow bias in gradient approximation
by properly specifying the stepsize policy. We prove that it not only
achieves the same optimal rate of convergence for both convex and non-convex optimizations,
but also exhibits the best-known rate of convergence for simulation-based
inference problems.
\begin{algorithm}[H]
\caption{Accelerate Inexact Gradient (AIG)} 
\label{alg0}
\begin{algorithmic}[1]
\Statex
\INPUT  
\Statex $\theta_{0}\in\Theta.$  
\Statex $\left\{ \beta_{\mathrm{k}}>0\right\}$, $\left\{ \lambda_{k}>0\right\}$  for any $k\geq2$.
\Statex $\left\{ \alpha_{k}\right\} \in\left(0,1\right)$ for $k>1$ and $\alpha_{1}=1$.
\newline
\State    $\theta_{0}^{ag}=\theta_{0}$.                   \Comment Initialize 
\For {$k$ in $1...N$}
    \State \begin{equation}\theta_{k}^{md}=(1-\alpha_{k})\theta_{k-1}^{ag}+\alpha_{k}\theta_{k-1}\label{eq:2.2}\end{equation} 
    \State  \begin{equation}\theta_{k}=\theta_{k-1}-\lambda_{k}\left(\widehat{\nabla\ell(\theta_{k}^{md})}\right)\label{eq:2.3}\end{equation} 
    \State  \begin{equation}\theta_{k}^{ag}=\theta_{k-1}^{md}-\beta_{k}\left(\widehat{\nabla\ell(\theta_{k}^{md})}\right)\label{eq:2.4}\end{equation} \Comment where $\widehat{\nabla\ell(\theta_{k}^{md})}$ is an estimation of
$\nabla\ell(\theta_{k}^{md})$ with error $\epsilon_{k}$.
\EndFor
\end{algorithmic}
\end{algorithm}
In addition to Assumption \ref{ass6}, we assume a noise control condition
for Algorithm \ref{alg0}. 
\begin{assumption}\label{ass7}
$\Theta$ is bounded. There exists an $A<\infty$ such that $\sum_{k=1}^{N}\lambda_{k}\left\Vert \epsilon_{k}\right\Vert <A.$ 
\end{assumption}
Given some mild conditions often satisfied by controlling the artificial
noises, we have the following result.
\begin{theorem}\label{thm4}
(Extension of Theorem 1 of \cite{ghadimi2016accelerated}).\\
Suppose Assumptions \ref{ass6} and \ref{ass7} hold. In addition, let $\{\theta_{k},\ \theta_{k}^{ag}\}$ $k\geq1$ be computed by Algorithm \ref{alg0}.\\
a) If sequences $\left\{ \alpha_{k}\right\} ,\left\{ \beta_{k}\right\} $,
$\left\{ \lambda_{k}\right\} $ and $\left\{ \Gamma_{k}\right\} $
satisfy
\begin{equation}
\Gamma_{k}:=\begin{cases}
1 & k=1\\
(1-\alpha_{k})\Gamma_{k-1} & k\geq2
\end{cases},\ \label{eq:2.6}
\end{equation}
\begin{equation}
C_{k}:=1-L{\displaystyle \lambda_{k}-\frac{L(\lambda_{k}-\beta_{k})^{2}}{2\lambda_{k}\alpha_{k}\Gamma_{k}}\left(\sum_{\tau=k}^{N}\frac{1}{\Gamma_{\tau}}\right)>0, \mbox{ for}\ 1\leq k\leq N}\label{eq:2.7},
\end{equation}
then for any $N\geq1$, we have for some $B<\infty$,
\begin{equation}
{\displaystyle \min_{k=1,...,N}\left\Vert \nabla\ell(\theta_{k}^{md})+\epsilon_{k}\right\Vert ^{2}\leq\frac{\ell(\theta_{0})-\ell^{*}+B}{\sum_{k=1}^{N}\lambda_{k}C_{k}}}.\label{eq:2.8}
\end{equation}
b) Suppose that $\ell(\cdot)$ is convex. If sequences $\left\{ \alpha_{k}\right\} ,\left\{ \beta_{k}\right\} $,$\left\{ \lambda_{k}\right\} $
and $\left\{ \Gamma_{k}\right\} $ satisfy
\begin{equation}
\alpha_{k}\lambda_{k}\leq\beta_{k}<\frac{1}{L}\label{eq:2.9},
\end{equation}
\begin{equation}
\frac{\alpha_{1}}{\lambda_{1}\Gamma_{1}}\geq\frac{\alpha_{2}}{\lambda_{2}\Gamma_{2}}\geq\ldots\label{eq:2.10},
\end{equation}
then for any $N\geq1$, we have
\begin{multline}
 \min_{k=1,...,N}\left\Vert \nabla\ell(\theta_{k}^{md})+\epsilon_{k}\right\Vert ^{2}\\
\leq2\frac{\frac{\Vert \theta^{*}-\theta_{0}\Vert^{2}}{2\lambda_{1}}+\sum_{k=1}^{N}\Gamma_{k}^{-1}\left[\beta_{k}\left\Vert \epsilon_{k}\right\Vert \left\Vert \nabla\ell(\theta_{k}^{md})+\epsilon_{k}\right\Vert +\alpha_{k}\left\Vert \epsilon_{k}\right\Vert \left\Vert \theta_{k-1}-\theta_0\right\Vert \right]}{\sum_{k=1}^{N}\Gamma_{k}^{-1}\beta_{k}(1-L\beta_{k})},\label{eq:2.11}
\end{multline}
\begin{multline}
\ell(\theta_{N}^{ag})-\ell(\theta^{*})\\
\leq\Gamma_{N}\left[\frac{\left\Vert \theta_{0}-\theta^{*}\right\Vert ^{2}}{\lambda_{1}}+\sum_{k=1}^{N}\Gamma_{k}^{-1}\left[\beta_{k}\left\Vert \epsilon_{k}\right\Vert \left\Vert \nabla\ell(\theta_{k}^{md})+\epsilon_{k}\right\Vert +\alpha_{k}\left\Vert \epsilon_{k}\right\Vert \left\Vert \theta_{k-1}-\theta_0\right\Vert \right]\right]\label{eq:2.12}.
\end{multline}
\end{theorem}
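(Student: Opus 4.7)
The plan is to adapt the proof architecture of Theorem 1 in \cite{ghadimi2016accelerated} to the biased-gradient setting, using Assumption \ref{ass7} in place of the zero-mean stochastic noise bound that Ghadimi uses. The starting point in both cases is the descent inequality obtained from the $L$-smoothness bound \eqref{eq:2.1} applied with $\vartheta=\theta_{k}^{ag}$ and $\theta=\theta_{k}^{md}$, combined with the update rule \eqref{eq:2.4}. Since $\theta_{k}^{ag}-\theta_{k}^{md}=-\beta_{k}\bigl(\nabla\ell(\theta_{k}^{md})+\epsilon_{k}\bigr)$, this will give a one-step relation of the form
\[
\ell(\theta_{k}^{ag})\geq \ell(\theta_{k}^{md})+\beta_{k}\bigl(1-\tfrac{L\beta_{k}}{2}\bigr)\bigl\|\nabla\ell(\theta_{k}^{md})+\epsilon_{k}\bigr\|^{2}-\beta_{k}\bigl\langle \epsilon_{k},\nabla\ell(\theta_{k}^{md})+\epsilon_{k}\bigr\rangle,
\]
and an analogous inequality mixing in $\theta_{k}$ via \eqref{eq:2.3} and \eqref{eq:2.2}. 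The Ghadimi bookkeeping step is then to introduce the weight sequence $\Gamma_{k}$ of \eqref{eq:2.6} and form a telescoping sum of $\Gamma_{k}^{-1}$-weighted inequalities, which converts the single-step descent into cumulative bounds.

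For part (a), I would follow the non-convex argument of \cite{ghadimi2016accelerated}: combine the two one-step inequalities, use the identity linking $\theta_{k}^{md}-\theta_{k-1}^{ag}$ and $\theta_{k}-\theta_{k-1}$ arising from \eqref{eq:2.2}, and telescope over $k=1,\dots,N$ so that the successive $\ell(\theta_{k}^{ag})$ terms cancel, leaving $\ell(\theta_{0})-\ell(\theta_{N}^{ag})\geq\ell(\theta_{0})-\ell^{*}$ on one side and $\sum_{k}\lambda_{k}C_{k}\|\nabla\ell(\theta_{k}^{md})+\epsilon_{k}\|^{2}$ on the other, where $C_{k}$ is exactly the quantity in \eqref{eq:2.7}. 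The bias cross-terms of the form $\lambda_{k}\langle \epsilon_{k},\nabla\ell(\theta_{k}^{md})+\epsilon_{k}\rangle$ that cannot be annihilated (as they were, in expectation, by Ghadimi's unbiasedness) must instead be absorbed into a deterministic constant $B$: applying Cauchy–Schwarz gives them the bound $\lambda_{k}\|\epsilon_{k}\|\cdot\|\nabla\ell(\theta_{k}^{md})+\epsilon_{k}\|$, and since $\Theta$ is bounded and $\ell$ is $L$-smooth, $\|\nabla\ell\|$ is uniformly bounded on $\Theta$; together with $\sum_{k}\lambda_{k}\|\epsilon_{k}\|<A$ from Assumption \ref{ass7}, this yields a finite $B$, completing \eqref{eq:2.8} after taking the minimum over $k$.

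For part (b), the convex case, I would reuse the one-step inequality but now also exploit $\ell(\theta_{k}^{md})\leq\ell(\theta^{*})+\langle\nabla\ell(\theta_{k}^{md}),\theta_{k}^{md}-\theta^{*}\rangle$. Expanding $\theta_{k}^{md}-\theta^{*}$ via \eqref{eq:2.2}, then substituting the update \eqref{eq:2.3} and expanding $\|\theta_{k}-\theta^{*}\|^{2}$, produces a recursion where $\Gamma_{k}^{-1}[\ell(\theta_{k}^{ag})-\ell(\theta^{*})]$ decreases by $\Gamma_{k}^{-1}\beta_{k}(1-L\beta_{k})\|\nabla\ell(\theta_{k}^{md})+\epsilon_{k}\|^{2}$ plus the standard $\|\theta_{k-1}-\theta^{*}\|^{2}/(2\lambda_{k})$ telescoping term, provided $\alpha_{k}\lambda_{k}\leq\beta_{k}<1/L$ (which is \eqref{eq:2.9}). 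The monotonicity condition \eqref{eq:2.10} ensures the squared-distance telescoping is in the right direction. The bias terms entering here are exactly $\beta_{k}\|\epsilon_{k}\|\|\nabla\ell(\theta_{k}^{md})+\epsilon_{k}\|$ and $\alpha_{k}\|\epsilon_{k}\|\|\theta_{k-1}-\theta_{0}\|$, obtained by bounding $\langle\epsilon_{k},\cdot\rangle$ via Cauchy–Schwarz; these are kept explicit rather than absorbed, giving \eqref{eq:2.11} after rearrangement and \eqref{eq:2.12} by reading off the telescoped residual.

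The main obstacle I anticipate is the handling of the bias cross-terms in the non-convex case (a): unlike \cite{ghadimi2016accelerated}, we cannot take expectations to kill $\langle \epsilon_{k}, \nabla\ell(\theta_{k}^{md})\rangle$, so one must carefully show that Assumption \ref{ass7}, together with the boundedness of $\Theta$ and $\nabla\ell$ on it, is exactly strong enough to absorb every such term into a single finite constant $B$ without destroying the $1/\sum_{k}\lambda_{k}C_{k}$ rate. The algebra of matching Ghadimi's weighting $\Gamma_{k}$ to the bias bookkeeping, so that the biased cross-terms end up multiplied by $\lambda_{k}$ rather than some larger factor, is the delicate part; once this step is in place, both the convex and non-convex bounds follow by essentially the same telescoping arguments as in the unbiased case.
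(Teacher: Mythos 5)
Your plan is essentially the paper's proof: the paper likewise transplants the argument of Theorem 1 of \cite{ghadimi2016accelerated}, keeps the $\Gamma_k$-weighted telescoping, bounds every bias cross-term $\langle\epsilon_k,\cdot\rangle$ by Cauchy--Schwarz, and absorbs the resulting $\sum_k\lambda_k\|\epsilon_k\|\,\|\nabla\ell(\theta_k^{md})+\epsilon_k\|$ into the constant $B$ using Assumption \ref{ass7} together with the boundedness of $\Theta$ (hence of $\nabla\ell$), while keeping the $\beta_k\|\epsilon_k\|\|\nabla\ell(\theta_k^{md})+\epsilon_k\|$ and $\alpha_k\|\epsilon_k\|\|\theta_{k-1}-\theta\|$ terms explicit in the convex case. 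The only point where your sketch is looser than the paper is part (a), where the telescoping is actually over $\ell(\theta_k)$ rather than $\ell(\theta_k^{ag})$ and the specific form of $C_k$ arises from solving the recursion for $\theta_k^{ag}-\theta_k$, applying Jensen's inequality with weights $\alpha_\tau/\Gamma_\tau$, and interchanging the order of summation---but that is exactly the Ghadimi--Lan machinery you explicitly defer to, so no new idea is missing.
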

There are various options for selecting $\left\{ \alpha_{k}\right\} ,\left\{ \beta_{k}\right\} $,$\left\{ \lambda_{k}\right\} $,$\left\{ \Gamma_{k}\right\}$.  
By controlling error $\epsilon_{k}$, we can provide some of these
selections below which guarantee the optimal convergence rate of the
AIG algorithm for both convex and nonconvex problems.
\begin{theorem}\label{thm5}
Suppose Assumptions \ref{ass6} and \ref{ass7} hold. In addition, suppose that $\left\{ \beta_{k}\right\} $ in the accelerated
gradient method are set to $\beta_{k}=\frac{1}{2L}$.

a) If sequences $\left\{ \alpha_{k}\right\}$ and $\left\{ \lambda_{k}\right\}$
satisfy
\begin{equation}
\lambda_{k}\in\left[\beta_{k},(1+\frac{1}{k})\beta_{k}\right],  \mbox{ for }{\displaystyle \;\forall k\geq1}\label{eq:2.28},
\end{equation}
then for any $N\geq1$, we have
\begin{equation}
{\displaystyle \min_{k=1,\ldots, N}\Vert\nabla\ell(\theta_{k}^{md})+\epsilon_{k}\Vert^{2}\leq O\left(\frac{1}{N}\right)}\label{eq:2.29}.
\end{equation}
Suppose that $\epsilon_{k}=O\left(\tau^{2}\right)\leq O(\frac{1}{k})$,
then the AIG method can find a solution $\bar{\theta}$ such
that $\left\Vert \nabla\ell(\bar{\theta})\right\Vert ^{2}\leq\epsilon$
in at most $O(1/\epsilon^{2})$ iterations.

b) Suppose that $\ell(\cdot)$ is convex and $\epsilon_{k}=O\left(\tau^{2}\right)\leq O(\frac{1}{k^{2+\delta+\delta_{1}}})$
for some $\delta_{1}>0$. If $\left\{ \lambda_{k}\right\} $ satisfies
\begin{equation}
{\displaystyle \lambda_{k}=\left(k^{1+\delta}-\left(k-1\right)^{1+\delta}\right)\:\forall k\geq1},\label{eq:2.30}
\end{equation}
then for any $N\geq1$, we have
\begin{equation}
\min_{k=1,...,N}\left\Vert \nabla\ell(\theta_{k}^{md})+\epsilon_{k}\right\Vert ^{2}{\displaystyle \leq}O\left(\frac{1}{N^{2+\delta}}\right),\label{eq:2.31}
\end{equation}
\begin{equation}
{\displaystyle \ell(\theta_{N}^{ag})-\ell(\theta^{*})\leq O\left(\frac{1}{N^{1+\delta}}\right)},\label{eq:2.32}
\end{equation}
then the AIG method can find a solution $\bar{\theta}$ such
that $\left\Vert \nabla\ell(\bar{\theta})\right\Vert ^{2}\leq\epsilon$
in $O\left(1/\epsilon^{\frac{1}{2+\delta}}\right)$ at most.
\end{theorem}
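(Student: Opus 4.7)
The plan is to derive Theorem \ref{thm5} from Theorem \ref{thm4} by specializing the sequences $\{\alpha_k,\beta_k,\lambda_k,\Gamma_k\}$ and then bounding the right-hand sides of (\ref{eq:2.8}), (\ref{eq:2.11}) and (\ref{eq:2.12}) using the prescribed decay rate of $\|\epsilon_k\|$ together with Assumption \ref{ass7}. In both parts I fix $\beta_k = 1/(2L)$ as stipulated; the remaining task is to verify the algebraic conditions (\ref{eq:2.6})--(\ref{eq:2.10}) and read off the resulting rates.

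For part (a), I would take the standard Nesterov choice $\alpha_k = 2/(k+1)$, which via the recursion (\ref{eq:2.6}) forces $\Gamma_k = 2/[k(k+1)]$. The key step is to check $C_k>0$ in (\ref{eq:2.7}): when $\lambda_k=\beta_k$ the second subtracted term vanishes and $C_k = 1-L\beta_k = 1/2$; for the perturbation $\lambda_k-\beta_k\in [0,\beta_k/k]$ the correction term is of order $(1/k^2)\cdot k^2\cdot (\sum_{\tau=k}^N 1/\Gamma_\tau)$ normalized by the prefactor $L/(2\lambda_k)$, and a careful computation shows the total contribution stays bounded, so $C_k \geq c > 0$ uniformly. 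Then $\sum_{k=1}^N \lambda_k C_k = \Omega(N/L)$ and (\ref{eq:2.8}) yields the $O(1/N)$ bound. Combining with $\|\nabla\ell(\bar\theta)\|^2 \leq 2\|\nabla\ell(\theta_k^{md})+\epsilon_k\|^2 + 2\|\epsilon_k\|^2$ and $\|\epsilon_k\|^2 = O(1/k^2)$ gives the stated iteration complexity.

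For part (b), the bookkeeping is more delicate. With $\lambda_k = k^{1+\delta}-(k-1)^{1+\delta} = \Theta(k^\delta)$, condition (\ref{eq:2.9}) forces $\alpha_k\leq\beta_k/\lambda_k = \Theta(1/k^\delta)$. A compatible choice is $\alpha_k$ selected so that $\Gamma_k^{-1}\sim k^{1+\delta}$ through the recursion (\ref{eq:2.6}), and the non-increasing condition (\ref{eq:2.10}) can then be checked directly since $\alpha_k/(\lambda_k\Gamma_k)\sim k^{1+\delta}/k^{2\delta} = k^{1-\delta}$ multiplied by the monotone factor from $\alpha_k$. For the error terms, using the boundedness of $\Theta$ (Assumption \ref{ass7}) and $\|\epsilon_k\| = O(1/k^{2+\delta+\delta_1})$, the sums $\sum_k \Gamma_k^{-1}\beta_k\|\epsilon_k\|$ and $\sum_k \Gamma_k^{-1}\alpha_k\|\epsilon_k\|\|\theta_{k-1}-\theta_0\|$ are both dominated by $\sum_k k^{1+\delta}\cdot k^{-(2+\delta+\delta_1)} = \sum_k k^{-1-\delta_1} < \infty$. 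Since the denominator $\sum_{k=1}^N\Gamma_k^{-1}\beta_k(1-L\beta_k) = \Theta(N^{2+\delta})$ and $\Gamma_N = \Theta(1/N^{1+\delta})$, plugging back gives the claimed $O(1/N^{2+\delta})$ and $O(1/N^{1+\delta})$ rates, and inverting yields the complexity $O(1/\epsilon^{1/(2+\delta)})$.

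The hard part will be the cross-term $\beta_k\|\epsilon_k\|\|\nabla\ell(\theta_k^{md})+\epsilon_k\|$ in (\ref{eq:2.11}), which is self-referential: what we want to bound appears on both sides of the inequality. I would handle this via a Young-type split $\|\epsilon_k\|\|\nabla\ell+\epsilon_k\| \leq \tfrac{\rho}{2}\|\nabla\ell+\epsilon_k\|^2 + \tfrac{1}{2\rho}\|\epsilon_k\|^2$ with $\rho>0$ chosen small enough that, after multiplying by $\Gamma_k^{-1}\beta_k$ and summing, the quadratic piece can be absorbed into the denominator of (\ref{eq:2.11}), leaving only a summable $\|\epsilon_k\|^2$ residual. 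Verifying that this absorption preserves the asserted rate, and that the same manipulation is consistent with the functional gap bound (\ref{eq:2.12}), is where the technical care is concentrated; everything else reduces to routine verification once the right sequences are in place.
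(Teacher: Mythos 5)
Your proposal is correct and follows essentially the same route as the paper's proof: specialize the sequences of Theorem~\ref{thm4}, verify conditions \eqref{eq:2.7}, \eqref{eq:2.9}, \eqref{eq:2.10}, and then bound the error sums using the decay of $\|\epsilon_k\|$ and the boundedness of $\Theta$. The paper parametrizes part (a) by $\Gamma_k=k^{-(1+\delta)}$ (your $\alpha_k=2/(k+1)$ is the $\delta=1$ instance) and in part (b) takes $\Gamma_k=k^{-(1+\delta)}$ with $\alpha_k=\lambda_k/k^{1+\delta}$, under which the ratio $\alpha_k/(\lambda_k\Gamma_k)$ is exactly constant, so \eqref{eq:2.10} holds with equality — slightly cleaner than your asymptotic verification. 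The one genuine point of divergence is the self-referential cross-term $\Gamma_k^{-1}\beta_k\|\epsilon_k\|\,\|\nabla\ell(\theta_k^{md})+\epsilon_k\|$: the paper does not absorb it via a Young split, but simply bounds $\|\nabla\ell(\theta_k^{md})+\epsilon_k\|$ and $\|\theta_{k-1}-\theta\|$ by constants (Assumption~\ref{ass7} gives $\Theta$ bounded and Assumption~\ref{ass6} gives a Lipschitz, hence bounded, gradient there), after which both error sums reduce to $\sum_k k^{1+\delta}\|\epsilon_k\|=O(\sum_k k^{-1-\delta_1})<\infty$ exactly as you compute. Your absorption argument buys robustness if $\Theta$ were unbounded, at the cost of extra bookkeeping; the paper's shortcut is adequate under its stated assumptions. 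Note also that both your argument and the paper's establish $C_k>0$ (respectively $\alpha_k\lambda_k\le\beta_k$) only for sufficiently large $k$ under the stated step-size windows, so neither is fully rigorous for the first few iterations — this defect is inherited from the paper, not introduced by you.
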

\begin{algorithm}[H]
\caption{Accelerate Iterated Filtering (AIF)} 
\label{alg1}
\begin{algorithmic}[1]
\Statex
\INPUT  
\Statex Starting parameter, $\theta_0=\theta^{ag}_0$ , sequences, $\alpha_n, \beta_n, \lambda_n, \Gamma_n$ 
\Statex simulator for $f_{X_0}(x_0|\theta)$, $f_{X_n|X_{n-1}}(x_n| x_{n-1}|\theta)$, evaluator for $f_{Y_n|X_n}(y_n| x_{n}|\theta)$
\Statex data, $y^*_{1:N}$, labels designating IVPs, $I\subset\{1,\dots,p\}$, initial scale multiplier, $C>0$
\Statex number of particles, $J$, number of iterations, $M$, cooling rate, $0<a<1$, perturbation scales, $\sigma_{1:p}$
\Ensure
\Statex Maximum likelihood estimate $\theta_{MLE}$
\newline
\State    $\theta^{md}_0=\theta_0$                   \Comment Initialize \label{alg1:init:perturb}
\State    $[\Theta^F_{0,j}]_i \sim N\left([\theta^{md}_0]_i, (C a^{m-1} \sigma_i)^2\right)$ for $i$ in $1..p$, $j$ in $1... J$. \Comment Initialize filter mean for parameters
\State    simulate $X_{0,j}^F \sim f_{X_0}\big(\cdot;{\Theta^F_{0,j}}\big)$ for $j$ \textrm{in} $1..J$. \Comment Initialize states \label{alg1:initstates}
\For {$m$ in $1...M$}
    \State    $\theta^{md}_m= (1-\alpha_m)\theta^{ag}_{m-1}+\alpha_{m_1}\theta_{m-1}$.
    \For {$n$ \textrm{in} $1... N$}
	\State      $\big[\Theta_{n,j}^{P}\big]_{i}\sim\mathcal{N}\big(\big[\Theta^F_{n-1,j}\big]_{i},(c^{m-1}\sigma_{i})^{2}\big)$ for $i\notin I$, $j$ in $1:J$.\;\Comment Perturb \label{alg1:perturb}
	\State      ${X}_{n,j}^{P}\sim{f}_n\big({x}_{n}|{X}_{n-1,j}^{F};{\Theta_{n,j}^{P}}\big)$ for $j$ in $1:J$.\; \Comment Simulate prediction particles \label{alg1:sim}
	\State      $w(n,j)=g_n(y_{n}^{*}|X_{n,j}^{P};\Theta_{n,j}^{P})$ for $j$ in $1:J$.\; \Comment Evaluate weights \label{alg1:weights}
	\State      $\breve{w}(n,j)=w(n,j)/\sum_{u=1}^{J}w(n,u)$.\; \Comment Normalize weights \label{alg1:normalize}
	\State      $k_{1:J}$ with $P\left\{ k_{u}=j\right\} =\breve{w}\left(n,j\right)$.\label{alg1:syst}\; \Comment Apply systematic resampling to select indices
	\State      $X_{n,j}^{F}=X_{n,k_{j}}^{P}$ and $\Theta_{n,j}^{F}=\Theta_{n,k_{j}}^{P}$ for $j$ in $1:J$.\;  \Comment Resample particles \label{alg1:resample}
    \EndFor
\State    $S_{m}=c^{-2(m-1)}\Psi^{-1}\sum_{n=1}^N\big[\left(\bar{\theta}_{n}-\theta^{md}_{m-1}\right)\big]$ \Comment Update Parameters \label{alg1:updateivps}
\State    $\big[\theta_{m}\big]_i=\theta_{m-1}-\lambda_{m-1}\big[S_{m}\big]_{i}$ for $i \notin I$.\;
\State    $\big[\theta^{ag}_{m}\big]_i=\theta^{md}_{m-1}-\beta_{m-1}\big[S_{m}\big]_{i}$ for $i \notin I$.\;
\State    $\big[\theta_{m}\big]_i=\frac{1}{J}\sum_{j=1}^J\big[\Theta^F_{L,j}\big]_i$ for $i \in I$.\;

\EndFor

\end{algorithmic}
\end{algorithm}
We now add a few remarks about the extension results obtained in Theorem \ref{thm5}. First, 
if the problem is convex, by choosing more aggressive stepsizes
$\{\lambda_{k}\}$ in \eqref{eq:2.30}, the AIG method exhibits the optimal
rate of convergence in \eqref{eq:2.32}. It is also worth noting that
with such a selection of $\{\lambda_{k}\}$, the AIG method can find
a solution $\bar{\theta}$ such that $\left\Vert \nabla\ell(\bar{\theta})\right\Vert ^{2}\leq\epsilon$
in at most $O(1/\epsilon^{1/2+\delta})$ iterations. The latter result has been shown by \cite{nesterov2005smooth},
\cite{ghadimi2016accelerated} but only for the accelerate unbiased gradient method. Second, 
observe that $\left\{ \lambda_{k}\right\} $ in \eqref{eq:2.28} for
general nonconvex problems is in the order of $O(1/L)$,
while the one in \eqref{eq:2.30} for convex problems are more aggressive
(in $O(k/L))$. The value $\delta$ is optimal at $1$ for
convergence rate. However, it may not be optimal
for computation of controlling the noises.  Finally, we show that we can apply the stepsize policy in \eqref{eq:2.28}
for solving general inexact gradient problems for both convex and nonconvex optimization. 
The sequential Monte Carlo filter can be arbitrarily approximated to the exact filter 
by choosing sufficiently large number of particles \citep{ionides11}.
It can be seen that we can choose the perturbation sequence so that the gradient noise satisfies
condition in Theorem~4. For completeness, we present the pseudo code of the proposed algorithm as in Algorithm \ref{alg1}.

\section{Numerical examples\label{sec:5Experiments}}
To measure the performance of the new inference algorithm, we evaluate
our accelerate iterated filtering on some benchmark examples and compare
it to the existing simulation-based approaches. We make use
of well tested and maintained code of R \cite{amanual} packages such
as pomp \cite{king15pomp}. Specifically, models are coded using C
snipet declarations \cite{king15pomp}. New algorithm is written in
R package is2, which provides user friendly interfaces in R and efficient
matrix operations in the highly optimized Rcpp \cite{eddelbuettel2011rcpp}.
All the simulation-based approaches mentioned above use sequential
Monte Carlo algorithm (SMC), implemented using bootstrap filter.
Experiments were carried out on a cluster of $32$ cores Intel
Xeon E5-2680 $2.7$ Ghz with $256$ GB memory. For a fair comparison,
we try to use the same setup  and assessment for every inference method. 
A public Github repository containing scripts for reproducing our
results may be found at https://github.com/nxdao2000/AIFcomparisons.

\subsection{Toy example: A linear, Gaussian model}

In this subsection, we compare our accelerate iterated filtering algorithm
to the original iterated filtering algorithm IF1 \cite{ionides06-pnas},
Bayes map iterated filtering (IF2) \cite{ionides15} and the second-order
iterated smoothing (IS2) \cite{nguyenis215}. It has been shown in \cite{nguyenis215} and \cite{ionides15} that 
the second-order iterated smoothing with white noise (IS1) \cite{doucet2013derivative} and particle Markov chain Monte Carlo (PMCMC) \cite{andrieu10}
do not perform as well as Bayes map iterated filtering so we leave them out.   
For a computationally convenient setting, simple models provide an opportunity
to test the basic features of inference algorithms. Therefore, we
first consider a bivariate discrete time Gaussian autoregressive process,
a relatively simple mechanistic model. This model is chosen so that
the Monte Carlo calculations can be verified using a Kalman filter.
For this example, there are some alternatives to iterated filtering
class. For example, EM and MCMC algorithms would be practical in this
case although they do not scale well to large dynamic models, so we
do not include them here. The model is given by the state space forms:
$X_{n}|X_{n-1}=x_{n-1}\sim\mathcal{N}(\alpha x_{n-1},\sigma^{\top}\sigma),$
$Y_{n}|X_{n}=x_{n}\sim\mathcal{N}(x_{n},I_{2})$ where $\alpha$,
$\sigma$ are $2\times2$ matrices and $I_{2}$ is $2\times2$ identity
matrix. The data are simulated from the following parameters: 
\[
{\displaystyle \alpha=\left[\begin{array}{cc}
\alpha_{1} & \alpha_{2}\\
\alpha_{3} & \alpha_{4}
\end{array}\right]=\left[\begin{array}{cc}
0.8 & -0.5\\
0.3 & 0.9
\end{array}\right],\ \sigma=\left[\begin{array}{cc}
3 & 0\\
-0.5 & 2
\end{array}\right].}
\]
The number of time points $N$ is set to $100$ and initial starting
point $X_{0}=\left(-3,4\right)$. For each method mentioned above,
we estimate parameters $\alpha_{2}$ and $\alpha_{3}$ for this model
using $J=1000$ particles and run our estimation for $M=25$ iterations.
We start the initial search uniformly on a large rectangular region
$[-1,1]\times[-1,1]$. As can be seen from Fig. 1, all of the distributions
of estimated maximized log likelihoods touch the true MLE (computed
from Kalman filter) at the vertical broken line, implying that they all
successfully converged. The results show that AIF is the most efficient
method of all because using AIF the results have higher mean and smaller
variance compared to other approaches, indicating a higher empirical
convergence rate. Algorithmically, AIF has similar computational costs
with the first order approaches IF1, IF2, and is cheaper than the
second order approach IS2. In deed, average computational
time of twenty independent runs of each approach is given in Table \ref{table:1}.
Additional overheads for estimating score make the computation time
of AIF a bit larger compared to computational time of IF2. However,
with complex models and large enough number of particles, these overheads
become negligible and computational time of AIF will be similar to
other first order approaches. The fact that it has the convergence
rate of second order with computation complexity of first-order shows
that it is a very promising algorithm. In addition, the results also imply that AIF is robust to initial starting
guesses. 

\begin{table}
\begin{center}
\caption{Computation times, in seconds, for the toy example.}
\label{table:1}
\vspace{2mm}
\begin{tabular}{lrrr}
 \hline
 & \rule[-1.5mm]{0mm}{7mm}\hspace{6mm}$J=100$
 & \hspace{6mm}$J=1000$ & \hspace{6mm} $J=10000$ \\ 
  \hline
IF1  & 1.656  & 5.251  & 62.632 \\ 
IF2  & 1.591  & 5.156  & 61.072 \\ 
IS2  & 2.530  & 10.198  & 135.248 \\ 
AIF & 2.729 & 10.278 & 132.016\\
   \hline
\end{tabular}
\end{center}
\end{table}
\begin{figure}
\vspace{-0.5cm}
 \includegraphics[width=7cm,height=7cm]{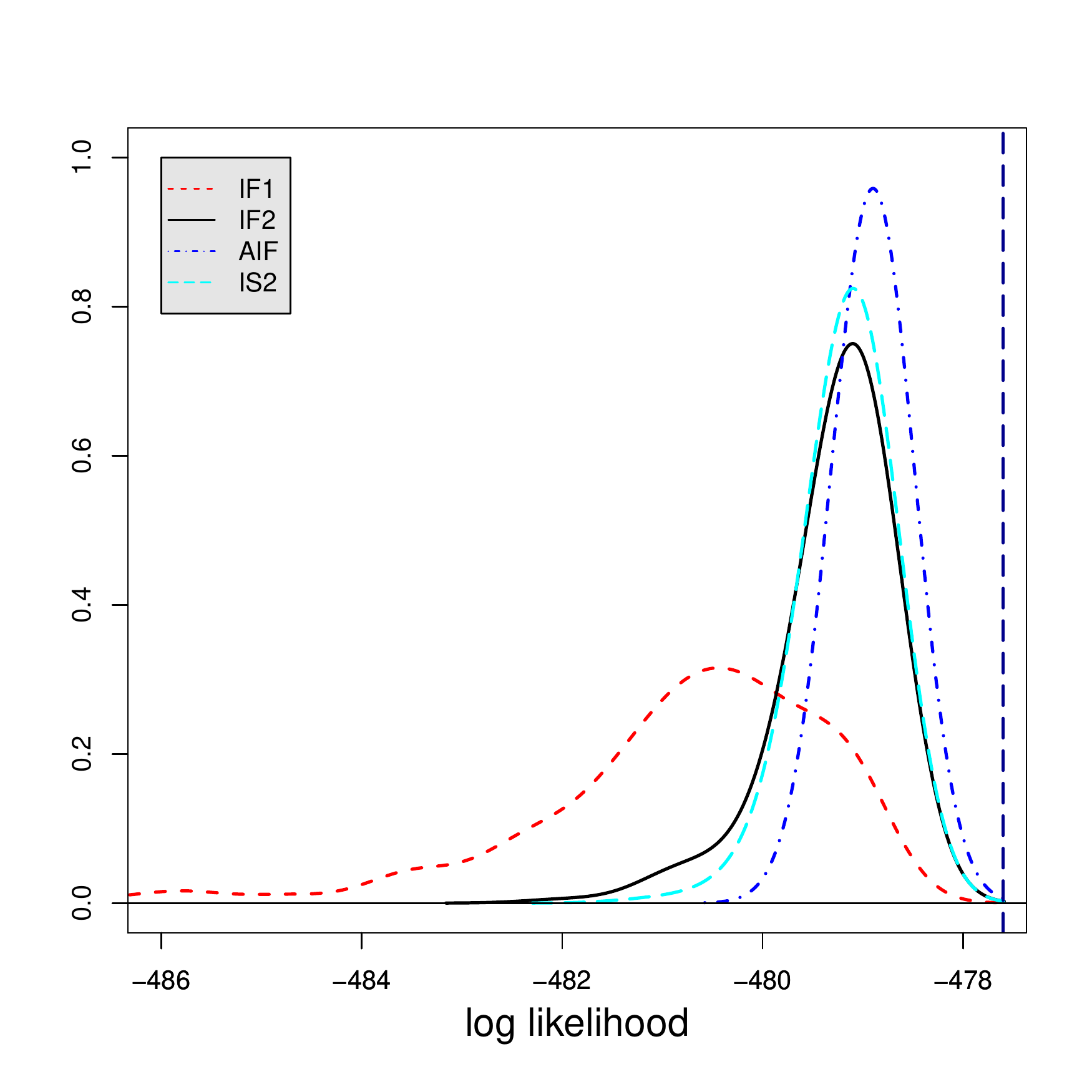} 
\caption{Comparison of estimators for the linear, Gaussian toy example, showing
the densities of the MLEs estimated by the IF1, IF2, AIF and
IS2 methods. The parameters $\alpha_{2}$ and $\alpha_{3}$ were
estimated, started from 200 randomly uniform initial values over a
large rectangular region $[-1,1]\times[-1,1]$. }
\label{fig:toy2} 
\end{figure}
\begin{figure}
\includegraphics[width=12cm,height=12cm]{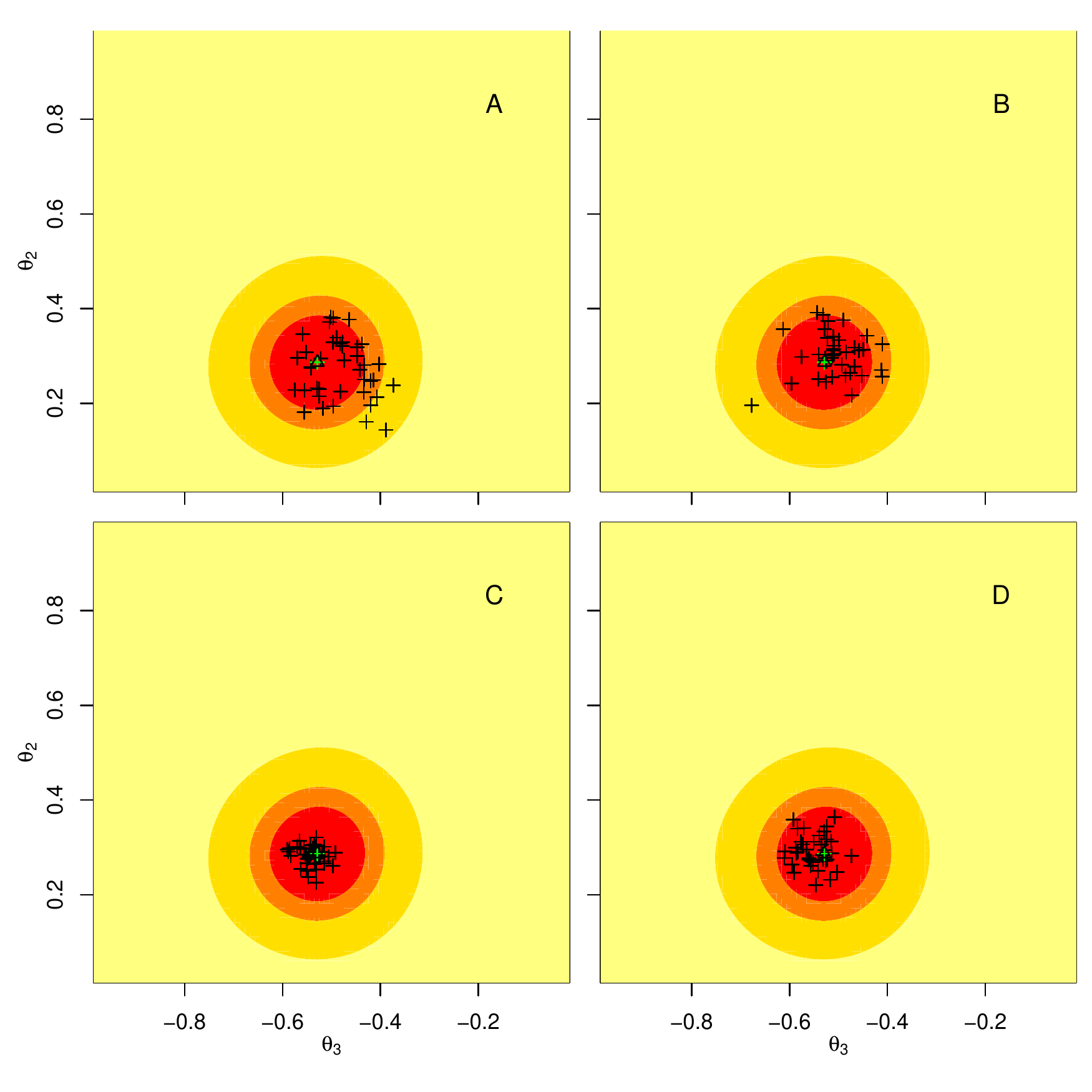} 
\caption{Comparison of different estimators. The likelihood surface for the
linear model, with the location of the MLE is marked with a green cross. 
The crosses show final points from 40 Monte
Carlo replications of the estimators: (A) Original iterated filtering method; (B)
Bayes map iterated filtering method; (C) Accelerate iterated filtering
method; (D) Second-order iterated filtering method; Each method, was
started uniformly over the rectangle shown, with $M=25$ iterations,
$N=1000$ particles, and a random walk standard deviation decreasing
from $0.02$ geometrically to $0.011$ for both $\alpha_{2}$ and
$\alpha_{3}$. }
\label{fig:sup:1} 
\end{figure}

To see how the final MLEs clustered around the true MLE, we only show
$40$ Monte Carlo replications for this toy example. As can be observed
from Fig.~\ref{fig:sup:1}, most of the replications clustered near
the true MLE for AIF approach, while none of them stays in a lower
likelihood region. It can be interpreted as a statistical summary
of Fig.~\ref{fig:sup:1}, with $200$ Monte Carlo replications. These
results indicate that AIF is clearly the best of the investigated
methods for this test compared to others. Given additional computational
resources, we also checked how the results of each method compared.
Specifically, we set $M=100$ iterations and $J=10000$ particles,
with the random walk standard deviation decreasing geometrically from
$0.02$ down to $0.0018$ for each method. In this situation, we 
confirm that AIF is the best among other IF1, IF2 and IS2. All methods
have comparable computational demands for given $M$ and $J$. 

\begin{figure}
\vspace{-0.5cm}
 \includegraphics[width=7cm,height=7cm]{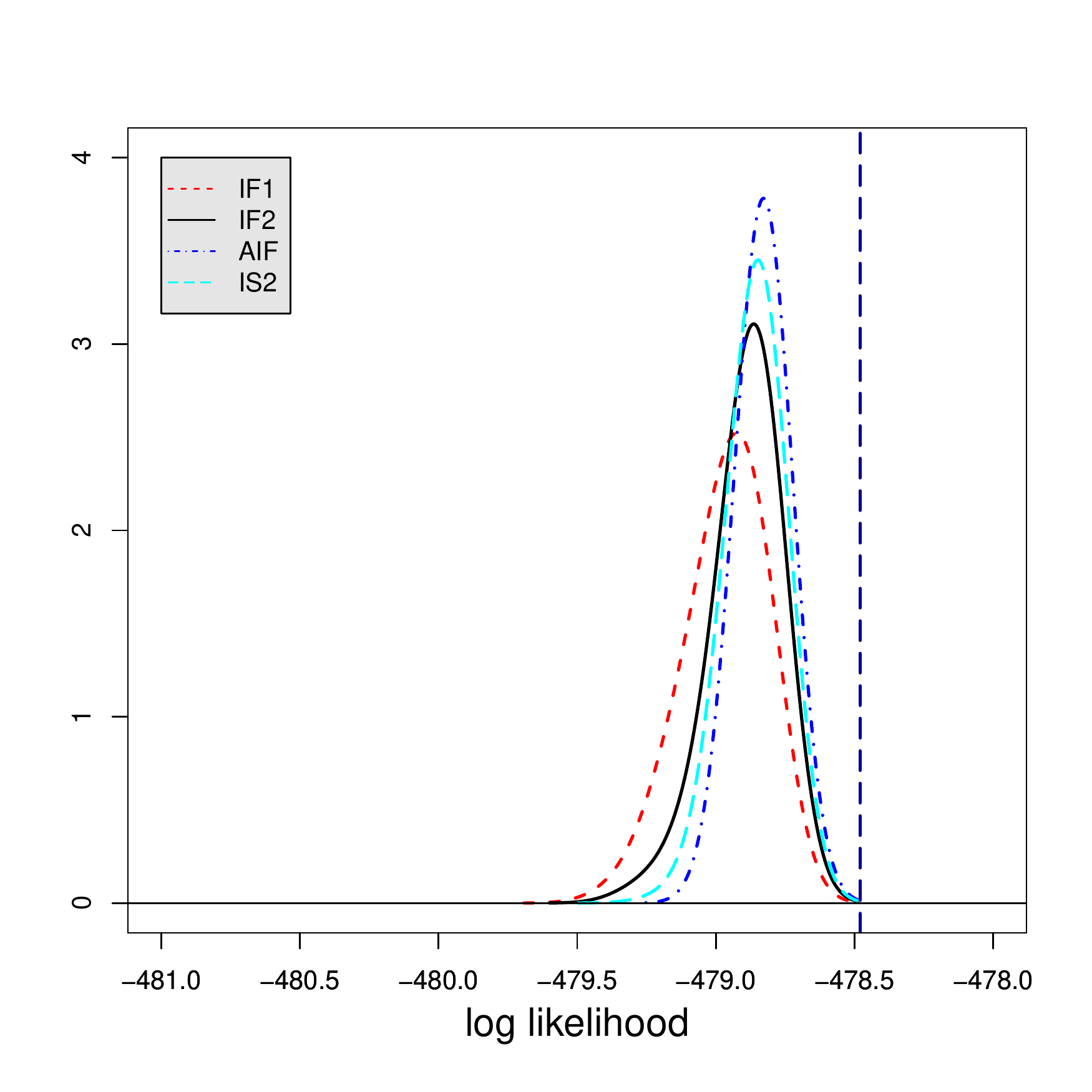} 
\caption{Comparison of estimators for the linear, Gaussian toy example, showing
the densities of the MLEs estimated by the IF1, IF2, AIF and
IS2 methods using $M=100$ iterations and $J=10000$ particles. The parameters $\alpha_{2}$ and $\alpha_{3}$ were
estimated, started from 200 randomly uniform initial values over a
large rectangular region $[-1,1]\times[-1,1]$. }
\label{fig:toy2} 
\end{figure}
\subsection{Malaria benchmark}
Many real world dynamic systems are highly nonlinear, partially observed
and even weakly identifiable. To demonstrate the capabilities of accelerate
iterated filtering for such situations, we apply it to evaluate the
likelihood in a stochastic differential equation for vivax malaria
model of \citet{roy12}. The reason to choose this challenging model
is that it provides a rigorous performance benchmark for our verification.
The model $SEIH^{3}QS$ we consider splits up the study population
of size $P(t)$ into seven classes: susceptible individuals, $S(t)$,
exposure $E(t)$, infected individuals, $I(t)$, dormant classes $H_{1}(t)$,
$H_{2}(t)$, $H_{3}(t)$ and recovered individuals, $Q(t)$. This
strain of malaria characterized by relapse following initial recovery
from symptoms \cite{nguyenis215}. Therefore the the last $S$ in
the model name indicates the possibility that a recovered person can
return to the class of susceptible individuals. The data, denoted
by $y_{1:N}^{*}$, are in the form of monthly time series over a
20-year period, counting the malaria morbidity. $\delta$ denotes the
mortality rate, $\kappa(t)$ a delay stage, $\mu_{SE}(t)$ the current
force of infection, and $\tau_{D}$ the mean latency time. The state process
is 
\[
X(t)=\big(S(t),E(t),I(t),Q(t),H_{1}(t),H_{2}(t),H_{3}(t),\kappa(t),\mu_{SE}(t)\big),
\]
where transition rates from stage $H_{1}$ to $H_{2}$, $H_{2}$ to
$H_{3}$ and $H_{3}$ to $Q$ are specified to be $3\mu_{HI}$ while
infected population to dormancy transition rate is $\mu_{IH}$.
The model satisfies the following stochastic differential equation
system
\begin{eqnarray*}
dS/dt & = & \delta P+\mathrm{d}P/dt+\mu_{IS}I+\mu_{QS}Q\\
 &  & \hspace{5mm}+a\mu_{IH}I+b\mu_{EI}E-\mu_{SE}(t)S-\delta S,\\
dE/dt & = & \mu_{SE}(t)S-\mu_{EI}E-\delta E,\\
dI/dt & = & (1-b)\mu_{EI}E+3\mu_{HI}H_{n}-(\mu_{IH}+\mu_{IS}+\mu_{IQ})I-\delta I,\\
dH_{1}/dt & = & (1-a)\mu_{IH}I-n\mu_{HI}H_{1}-\delta H_{1},\\
dH_{i}/dt & = & 3\mu_{HI}H_{i-1}-3\mu_{HI}H_{i}-\delta H_{i}\hspace{5mm}\mbox{for \ensuremath{i\in\{2,3\}}},\\
dQ/dt & = & \mu_{IQ}I-\mu_{QS}Q-\delta Q.
\end{eqnarray*}
In addition, the malaria pathogen reproduction within the mosquito
vector is given by 
\begin{eqnarray*}
\mathrm{d}\kappa/\mathrm{d}t & = & [\lambda(t)-\kappa(t)]/\tau_{D},\\
\mathrm{d}\mu_{SE}/\mathrm{d}t & = & [\kappa(t)-\mu_{SE}(t)]/\tau_{D},
\end{eqnarray*}
where $\lambda(t)$ is the latent force of infection and $\lambda(t)$,
$\kappa(t)$ and $\mu_{SE}(t)$ satisfies 
\begin{equation}
\mu_{SE}(t)=\int_{-\infty}^{t}\gamma(t-s)\lambda(s)\mathrm{d}s,
\end{equation}
with $\gamma(s)=\frac{(2/\tau_{D})^{2}s^{2-1}}{(2-1)!}\exp(-2s/\tau_{D})$,
a gamma distribution with shape parameter $2$. Since the latent force
of infection is constrained by rainfall covariate $R(t)$ and some
Gamma white noise, from \citet{roy12} we have:
\[
{\lambda(t)=\left(\frac{I+qQ}{P}\right)\times\exp\left\{ \sum_{i=1}^{N_{s}}b_{i}s_{i}(t)+b_{r}R(t)\right\} {\times\left[\frac{\mathrm{d}\Gamma(t)}{\mathrm{d}t}\right]}}.
\]
In this equation, $q$ denotes a reduced infection risk from humans
in the $\mathrm{Q}$ class and $\{s_{i}(t),i=1,\dots,N_{s}\}$ is
a periodic cubic B-spline basis, with $N_{s}=6$. The observation
model for $Y_{n}$ is a negative binomial distribution with mean $M_{n}$
and variance $M_{n}+M_{n}^{2}\sigma_{\mathrm{obs}}^{2}$ where $M_{n}=\rho\int_{t_{n-1}}^{t_{n}}[\mu_{EI}E(s)+3\mu_{HI}H_{3}(s)]ds$
is the number of new cases observed from time $t_{n-1}$to time $t_{n}$
and $\rho$ it the mean age. The coupled system of stochastic differential
equations is solved using an Euler-Maruyama scheme \citep{kloeden99}
with a time step of $1/20$ month in our case. 
\begin{figure}
\vspace{-0.5cm}
 \includegraphics[width=7cm,height=7cm]{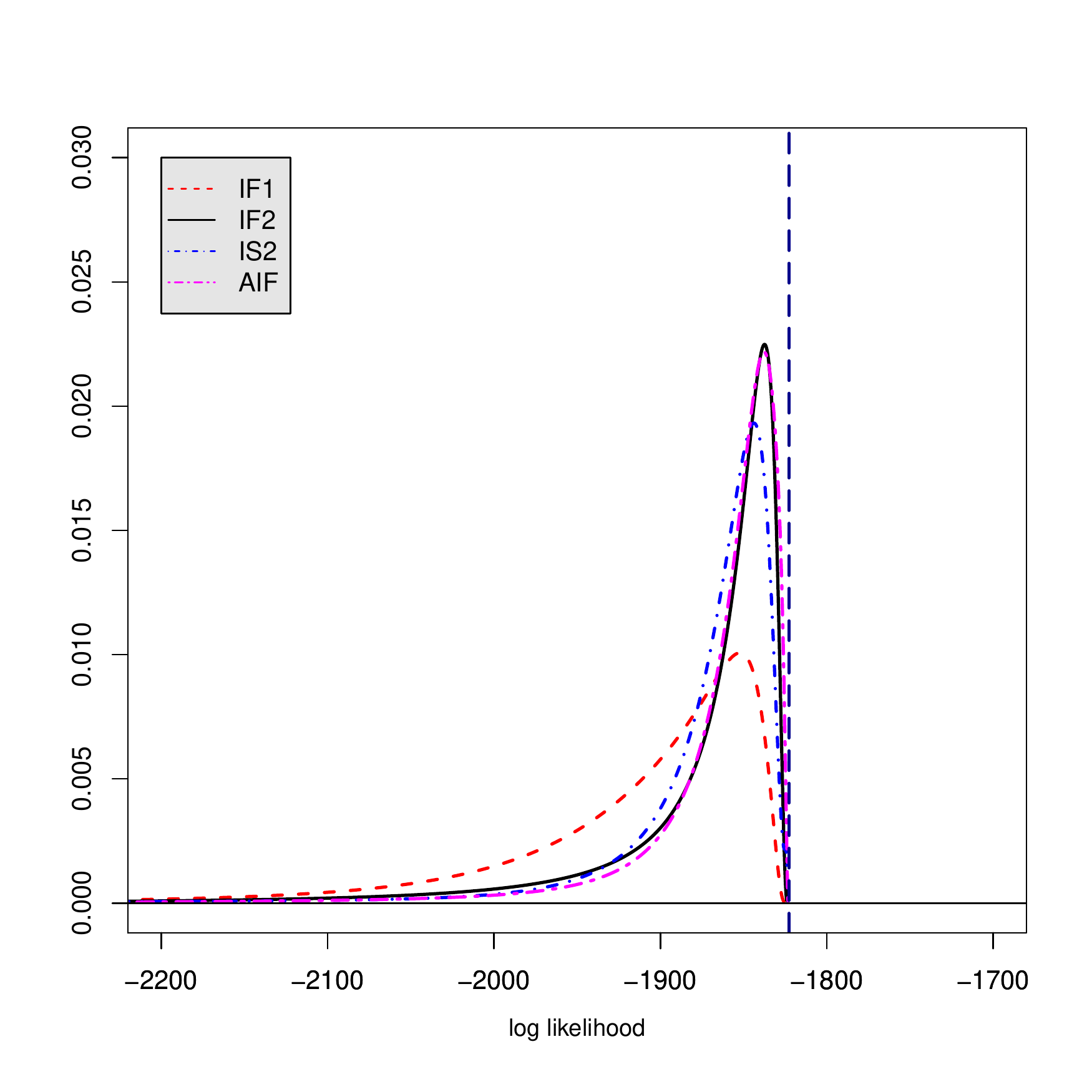} 
\caption{The density of the maximized log likelihood approximations
estimated by IF1, IF2, IS2, RIS1 and AIF for the malaria model when using
J = 1000 and M = 50. The log likelihood at a previously computed
MLE is shown as a dashed vertical line. }
\label{fig:malaria} 
\end{figure}
Given the data obtained from National Institutes of Malaria Research
\citep{roy12}, we carried out simulation-based inference via the
original iterated filtering (IF1), the perturbed Bayes map iterated filtering
(IF2), the second order iterated smoothing  (IS2), and the new accelerate
iterated filtering (AIF). The inference goal used to assess all of these methods is to find
high likelihood parameter values starting from randomly drawn values
in a large hyperrectangle. In the presence of possible
multi-modality, weak identifiability, and considerable Monte~Carlo
error of this model, we start $200$ random searches. The random walk
standard deviation is initially set to $0.1$ for estimated parameters
while the cooling rate $c$ is set to $0.1^{0.02}\approx0.95$. These
corresponding quantities for initial value parameters are $2$ and
$0.1^{0.02}$, respectively, but they are applied only at time zero.
We run our experiment on a cluster computers with $M=50$ iterations
and with $J=1000$ particles. The reason to choose these values for
this model is that increasing the iterations to $100$ and the number
of particles to $10000$ does not improve the results much but it
takes significant longer time. Figure~\ref{fig:malaria} shows the
distribution of the MLEs estimated by IF1, IF2, IS2 and AIF. All distributions
touch the global maximum as expected and the higher mean and smaller
variance of IF2, AIF estimation clearly demonstrate that they are
considerably more effective than IF1. Note that the computational
times for IF1, IF2, IS2 and AIF are 44.86, 43.92, 53.10 and 52.25 minutes respectively, confirming
that accelerate iterated filtering has essentially the same computational
cost as first order methods IF1, IF2 and is cheaper a bit than IS2, for a given Monte Carlo sample
size and number of iterations. In this hard problem, while IF1 reveal
their limitations, we have shown that IF2 and AIF can still offer
a substantial improvement. A natural heuristic idea to further improve
the method is hybridizing IF2 and AIF but we leave it for the future
work.  
 
\section{Conclusion\label{sec:5Conclusion}}

In this paper, we have proposed a novel class of iterated filtering theory using
an accelerated inexact gradient approach. We have shown that choosing 
perturbation sequence and number of particles carefully results in an 
algorithm which has led to many advances including the statistical and computational efficiency. This is also very
fruitful as it is extendable to a more generalized class of algorithm,
based on proximal theory. Previous proof of iterated filtering class
require some difficult conditions, which is not easily verifiable. However,
in this article, we use only general standard gradient conditions. We are going further down the road of
a more systematic approach which could be easily generalized to the state of the art
algorithm in the optimization literatures. The convergence rate is also explicitly stated 
and it is better than standard theory. From a theoretical 
point of view, it could be an interesting perspective and insight. 

In addition, from practical point of view, we have provided an efficient
framework, applicable to a general class of nonlinear, non-Gaussian
non-standard POMP models, especially suitable in the control feedback
system. There are a lot of such systems, which are not 
well-treated by current available modeling framework. We simultaneously
present the performance of our open source software package is2 to
facilitate the needs of the community. The performance of this new
approaches surpass the other frameworks by a large margin of magnitude.

It may be surprising that this simple accelerated inexact gradient approach has the
needed convergence properties, and can easily be generalized, at least
in some asymptotic sense. It is not hard to show that the accelerated inexact proximal gradient iterated filtering theory
can be adapted to apply with iterated smoothing and with either independent
white noise or random walk perturbations while our empirical results still show strong evidences of the improvements. 
In principle, 
different simulation-based inference methods can readily be hybridized
to build on the strongest features of multiple algorithms. Our results
could also be applied to develop other simulation-based methodologies
which can take advantage of proximal map.
For example, it may be possible to use our approach to help design efficient 
proposal distributions for particle Markov chain Monte Carlo algorithms.
The theoretical and algorithmic innovations of this paper will help to build a new direction
for future developments on this frontier. 
Applying this approach to methodologies like Approximate Bayesian Computation
(ABC), Liu-West Particle Filter (LW-PF), Particle Markov chain Monte Carlo (PMCMC), 
with different samplers scheme, 
e.g. forward backward particle filter, forward smoothing or forward backward smoothing
are foreseeable extensions.

\appendix
\section{Proofs \label{AppA}}

We first need a simple technical result (see Lemma 1 of \cite{ghadimi2016accelerated}).
The proof is the same as that of Lemma 1 of \cite{ghadimi2016accelerated}
but we provide it here for completeness.

\begin{lemma} (Lemma1 of \cite{ghadimi2016accelerated}. 

Assume sequences
$\{\alpha_{k}\}\in\left(0,1\right)$ for $k>1$ and $\alpha_{1}=1$
and sequences $\{a_{k}\},$ $\{\eta_{k}\}$ satisfy
\begin{equation}
a_{k}\leq(1\ -\alpha_{k})a_{k-1}+\eta_{k},\ k=1,2,\ldots\label{eq:2.5}
\end{equation}
If we define a positive sequence $\{\Gamma_{k}\}$ as in \ref{eq:2.6}
then for any $k\geq1$, we have 
$${\displaystyle a_{k}\leq\Gamma_{k}\sum_{i=1}^{k}(\eta_{i}/\Gamma_{i})}.$$
\end{lemma}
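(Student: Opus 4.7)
The plan is to prove this by a straightforward induction on $k$, exploiting the fact that the auxiliary sequence $\{\Gamma_k\}$ has been defined precisely so that dividing the hypothesized recursion by $\Gamma_k$ telescopes cleanly. Concretely, since $\Gamma_k = (1-\alpha_k)\Gamma_{k-1}$ for $k \geq 2$, dividing both sides of $a_k \leq (1-\alpha_k)a_{k-1} + \eta_k$ by $\Gamma_k$ should yield $a_k/\Gamma_k \leq a_{k-1}/\Gamma_{k-1} + \eta_k/\Gamma_k$, and summing gives the stated bound. I would organize the argument as a formal induction rather than a summation identity, but the mechanism is the same.

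For the base case $k=1$, the assumption $\alpha_1 = 1$ forces $(1-\alpha_1)a_0 = 0$, so the hypothesis reduces to $a_1 \leq \eta_1$; since $\Gamma_1 = 1$, the right-hand side of the claimed bound is likewise $\eta_1$, and the base case is immediate. For the inductive step, assume $a_{k-1} \leq \Gamma_{k-1}\sum_{i=1}^{k-1}\eta_i/\Gamma_i$ and substitute into the recursion to obtain $a_k \leq (1-\alpha_k)\Gamma_{k-1}\sum_{i=1}^{k-1}\eta_i/\Gamma_i + \eta_k$. Using $(1-\alpha_k)\Gamma_{k-1} = \Gamma_k$ and rewriting $\eta_k = \Gamma_k \cdot (\eta_k/\Gamma_k)$ lets the last term be absorbed into the sum, giving $a_k \leq \Gamma_k \sum_{i=1}^{k} \eta_i/\Gamma_i$ as required.

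There is no real obstacle to overcome: the whole proof is one line of algebra per step, and the definition of $\Gamma_k$ in \eqref{eq:2.6} has been tailored to make the telescoping work. The only subtle point to verify is that the choice $\alpha_1 = 1$ combined with $\Gamma_1 = 1$ allows the initial value $a_0$ to drop out without requiring any assumption on it, so that the induction is self-starting. Once this is checked, the inductive step is immediate and the proof is complete.
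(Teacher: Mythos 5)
Your proof is correct and uses essentially the same mechanism as the paper: the paper divides the recursion by $\Gamma_k$ to get $a_k/\Gamma_k \leq a_{k-1}/\Gamma_{k-1} + \eta_k/\Gamma_k$ and telescopes, which is just the unrolled form of your induction. The base case handling (using $\alpha_1=1$ and $\Gamma_1=1$ so that $a_0$ drops out) is identical in both.
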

\begin{proof}
Since $\alpha_{1}=1$ and $\Gamma_{1}=1$, from \ref{eq:2.6}
we have 
\[
a_{1}\leq\eta_{1}
\]
or
\[
\frac{a_{1}}{\Gamma_{1}}\leq\frac{\eta_{1}}{\Gamma_{1}}.
\]
Since $\Gamma_{k}>0$ for every $k>1$, dividing both sides of \ref{eq:2.5}
by $\Gamma_{k}$, 
\[
\frac{a_{k}}{\Gamma_{k}}\leq\frac{(1-\alpha_{k})a_{k-1}+\eta_{k}}{\Gamma_{k}}=\frac{a_{k-1}}{\Gamma_{k-1}}+\frac{\eta_{k}}{\Gamma_{k}},\ \forall k\geq2.
\]
Summing up the above inequalities and rearranging the terms, the conclusion
follows.
\end{proof}
\begin{lemma} 
\begin{equation}
\sum_{\tau=1}^{k}\frac{\alpha_{\tau}}{\Gamma_{\tau}}=\frac{1}{\Gamma_{k}}.\label{eq:2.16}
\end{equation}
\end{lemma}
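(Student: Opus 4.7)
The claim is a purely algebraic identity that follows directly from the recursive definition of $\Gamma_k$ in \eqref{eq:2.6}. I see two natural routes, both short; I would present the telescoping one because it is slicker and makes the mechanism transparent.

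The plan is to rewrite each summand as a difference of consecutive reciprocals of $\Gamma$'s. For $\tau \geq 2$, the defining relation $\Gamma_\tau = (1-\alpha_\tau)\Gamma_{\tau-1}$ gives
\[
\frac{1-\alpha_\tau}{\Gamma_\tau} \;=\; \frac{1}{\Gamma_{\tau-1}},
\qquad\text{hence}\qquad
\frac{\alpha_\tau}{\Gamma_\tau} \;=\; \frac{1}{\Gamma_\tau} - \frac{1}{\Gamma_{\tau-1}}.
\]
Substituting this into the sum and keeping the $\tau=1$ term separate (where $\alpha_1=1$ and $\Gamma_1=1$, so $\alpha_1/\Gamma_1 = 1$), the sum telescopes:
\[
\sum_{\tau=1}^{k}\frac{\alpha_\tau}{\Gamma_\tau}
= 1 + \sum_{\tau=2}^{k}\left(\frac{1}{\Gamma_\tau} - \frac{1}{\Gamma_{\tau-1}}\right)
= 1 + \frac{1}{\Gamma_k} - \frac{1}{\Gamma_1}
= \frac{1}{\Gamma_k},
\]
which is the desired identity.

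As a back-up (or a more pedestrian alternative) I would offer a quick induction on $k$. The base case $k=1$ gives $\alpha_1/\Gamma_1 = 1 = 1/\Gamma_1$. For the inductive step, assuming $\sum_{\tau=1}^{k-1}\alpha_\tau/\Gamma_\tau = 1/\Gamma_{k-1}$, one uses $1/\Gamma_{k-1} = (1-\alpha_k)/\Gamma_k$ and adds $\alpha_k/\Gamma_k$ to obtain $1/\Gamma_k$. There is no real obstacle here: the only thing to be careful about is that the recursion $\Gamma_k = (1-\alpha_k)\Gamma_{k-1}$ only holds for $k\geq 2$, so the $\tau=1$ term must be treated separately (which is why the telescope starts at $\tau=2$ and uses $\alpha_1=1$ as the boundary value).
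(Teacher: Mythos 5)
Your proof is correct and uses exactly the same telescoping argument as the paper: the identity $\alpha_\tau/\Gamma_\tau = 1/\Gamma_\tau - 1/\Gamma_{\tau-1}$ derived from the recursion $\Gamma_\tau = (1-\alpha_\tau)\Gamma_{\tau-1}$, with the $\tau=1$ term handled separately via $\alpha_1=\Gamma_1=1$. The supplementary induction you sketch is a valid alternative but adds nothing beyond the telescope.
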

\begin{proof}
We have 
\[
\sum_{\tau=1}^{k}\frac{\alpha_{\tau}}{\Gamma_{\tau}}=\frac{\alpha_{1}}{\Gamma_{1}}+\sum_{\tau=2}^{k}\frac{1}{\Gamma_{\tau}}\left(1-(1-\alpha_{\tau})\right)
\]
\[
={\displaystyle \frac{1}{\Gamma_{1}}+\sum_{\tau=2}^{k}(\frac{1}{\Gamma_{\tau}}-\frac{1}{\Gamma_{\tau-1}})=\frac{1}{\Gamma_{k}}}.
\]
\end{proof}
\subsection{Proof of Theorem 3}
\begin{proof} The proof follows closely to the proof of theorem 1
of \cite{ghadimi2016accelerated} except we consider bias estimate
of the gradient. We first prove part a.

By \ref{eq:2.1} and \ref{eq:2.3}, we have 
\[
\ell(\theta_{k})\leq\ell(\theta_{k-1})+\left\langle \nabla\ell(\theta_{k-1}),\ \theta_{k}-\theta_{k-1}\right\rangle {\displaystyle +\frac{L}{2}\Vert\theta_{k}-\theta_{k-1}\Vert^{2}}
\]
\[
=\ell(\theta_{k-1})+\left\langle \left(\nabla\ell(\theta_{k-1})-\nabla\ell(\theta_{k}^{md})-\epsilon_{k}\right)+\left(\nabla\ell(\theta_{k}^{md})+\epsilon_{k}\right),\ -\lambda_{k}\left(\nabla\ell(\theta_{k}^{md})+\epsilon_{k}\right)\right\rangle 
\]
\[
+\frac{L\lambda_{k}^{2}}{2}\Vert\nabla\ell(\theta_{k}^{md})+\epsilon_{k}\Vert^{2}
\]
\[
=\ell(\theta_{k-1})-\lambda_{k}(1-\frac{L\lambda_{k}}{2})\Vert\nabla\ell\ (\theta_{k}^{md})+\epsilon_{k}\Vert^{2}-\lambda_{k}\left\langle \left(\nabla\ell(\theta_{k-1})-\nabla\ell(\theta_{k}^{md})-\epsilon_{k}\right),\ \left(\nabla\ell(\theta_{k}^{md})+\epsilon_{k}\right)\right\rangle 
\]
\[
\leq\ell(\theta_{k-1})-\lambda_{k}(1-\frac{L\lambda_{k}}{2})\Vert\nabla\ell\ (\theta_{k}^{md})+\epsilon_{k}\Vert^{2}+\lambda_{k}\left(\Vert\nabla\ell(\theta_{k-1})-\nabla\ell(\theta_{k}^{md})\Vert+\left\Vert \epsilon_{k}\right\Vert \right)\cdot\Vert\nabla\ell(\theta_{k}^{md})+\epsilon_{k}\Vert,
\]
\[
\leq\ell(\theta_{k-1})-\lambda_{k}(1-\frac{L\lambda_{k}}{2})\Vert\nabla\ell\ (\theta_{k}^{md})+\epsilon_{k}\Vert^{2}+\lambda_{k}\left(L\Vert\theta_{k-1}-\theta_{k}^{md}\Vert+\left\Vert \epsilon_{k}\right\Vert \right)\cdot\Vert\nabla\ell(\theta_{k}^{md})+\epsilon_{k}\Vert,
\]
\[
=\ell(\theta_{k-1})-\lambda_{k}(1-\frac{L\lambda_{k}}{2})\Vert\nabla\ell\ (\theta_{k}^{md})+\epsilon_{k}\Vert^{2}+\lambda_{k}\left(L(1-\alpha_{k})\Vert\theta_{k-1}^{ag}-\theta_{k-1}\Vert+\left\Vert \epsilon_{k}\right\Vert \right)\cdot\Vert\nabla\ell(\theta_{k}^{md})+\epsilon_{k}\Vert,
\]
\[
=\ell(\theta_{k-1})-\lambda_{k}\left(1-\frac{L\lambda_{k}}{2}\right)\left\Vert \nabla\ell(\theta_{k}^{md})+\epsilon_{k}\right\Vert {}^{2}
\]
\[
+L(1-\alpha_{k})\lambda_{k}\left\Vert \nabla\ell(\theta_{k}^{md})+\epsilon_{k}\right\Vert \cdot\left\Vert \theta_{k-1}^{ag}-\theta_{k-1}\right\Vert +\lambda_{k}\left\Vert \epsilon_{k}\right\Vert \cdot\left\Vert \nabla\ell(\theta_{k}^{md})+\epsilon_{k}\right\Vert 
\]
\[
\leq\ell\ (\theta_{k-1})-\lambda_{k}\left(1-\frac{L\lambda_{k}}{2}\right)\left\Vert \nabla\ell(\theta_{k}^{md})+\epsilon_{k}\right\Vert {}^{2}
\]
\[
+\frac{L\lambda_{k}^{2}}{2}\left\Vert \nabla\ell\ (\theta_{k}^{md})+\epsilon_{k}\right\Vert {}^{2}+\frac{L(1-\alpha_{k})^{2}}{2}\left\Vert \theta_{k-1}^{ag}-\theta_{k-1}\right\Vert {}^{2}+\lambda_{k}\left\Vert \epsilon_{k}\right\Vert \cdot\left\Vert \nabla\ell(\theta_{k}^{md})+\epsilon_{k}\right\Vert 
\]
\[
=\ell(\theta_{k-1})-\lambda_{k}(1-L\lambda_{k})\left\Vert \nabla\ell(\theta_{k}^{md})+\epsilon_{k}\right\Vert {}^{2}
\]
\begin{equation}
+{\displaystyle \frac{L(1-\alpha_{k})^{2}}{2}\left\Vert \theta_{k-1}^{ag}-\theta_{k-1}\right\Vert {}^{2}+\lambda_{k}\left\Vert \epsilon_{k}\right\Vert \cdot\left\Vert \nabla\ell(\theta_{k}^{md})+\epsilon_{k}\right\Vert }\label{eq:2.15}
\end{equation}
The second inequality is from triangular inequality and the Cauchy-Schwarz
inequality while the second inequality is due to the Lipschitz of
gradient assumption (1.2) and last equality comes from \ref{eq:2.2}.
We have the last inequality follows from $ab\leq(a^{2}+b^{2})/2$.
From \ref{eq:2.2}, \ref{eq:2.3}, and \ref{eq:2.4}, it follows that
\[
\theta_{k}^{ag}-\theta_{k}=(1-\alpha_{k})\theta_{k-1}^{ag}+\alpha_{k}\theta_{k-1}-\beta_{k}\left(\nabla\ell(\theta_{k}^{md})+\epsilon_{k}\right)-\left(\theta_{k-1}-\lambda_{k}\left(\nabla\ell(\theta_{k}^{md})+\epsilon_{k}\right)\right)
\]
\[
=(1-\alpha_{k})(\theta_{k-1}^{ag}-\theta_{k-1})+(\lambda_{k}-\beta_{k})\left(\nabla\ell(\theta_{k}^{md})+\epsilon_{k}\right).
\]
Applying Lemma 1 where $\theta_{k}^{ag}-\theta_{k}:=a_{k}$ and $\eta_{k}:=(\lambda_{k}-\beta_{k})\left(\nabla\ell(\theta_{k}^{md})+\epsilon_{k}\right)$,
we obtain 
\[
\theta_{k}^{ag}-\theta_{k}=\Gamma_{k}\sum_{\tau=1}^{k}(\frac{\lambda_{\tau}-\beta_{\tau}}{\Gamma_{\tau}})\left(\nabla\ell(\theta_{\tau}^{md})+\epsilon_{\tau}\right).
\]
Since$\left\Vert \cdot\right\Vert {}^{2}$ is convex, using Jensen's
inequality and Lemma 2 we have 
\[
\left\Vert \theta_{k}^{ag}-\theta_{k}\right\Vert {}^{2}=\left\Vert \Gamma_{k}\sum_{\tau=1}^{k}(\frac{\lambda_{\tau}-\beta_{\tau}}{\Gamma_{\tau}})\left(\nabla\ell\ (\theta_{\tau}^{md})+\epsilon_{k}\right)\right\Vert ^{2}
\]
\[
=\left\Vert \Gamma_{k}\sum_{\tau=1}^{k}\frac{\alpha_{\tau}}{\Gamma_{\tau}}\left[\left(\frac{\lambda_{\tau}-\beta_{\tau}}{\alpha_{\tau}}\right)\left(\nabla\ell(\theta_{\tau}^{md})+\epsilon_{k}\right)\right]\right\Vert ^{2}
\]
\[
\leq\Gamma_{k}\sum_{\tau=1}^{k}\frac{\alpha_{\tau}}{\Gamma_{\tau}}\left\Vert \left(\frac{\lambda_{\tau}-\beta_{\tau}}{\alpha_{\tau}}\right)\left(\nabla\ell(\theta_{\tau}^{md})+\epsilon_{k}\right)\right\Vert ^{2}
\]
\begin{equation}
={\displaystyle \Gamma_{k}\sum_{\tau=1}^{k}\frac{(\lambda_{\tau}-\beta_{\tau})^{2}}{\Gamma_{\tau}\alpha_{\tau}}\left\Vert \nabla\ell(\theta_{\tau}^{md})+\epsilon_{\tau}\right\Vert ^{2}}.\label{eq:2.17}
\end{equation}
Replacing the above bound in \ref{eq:2.15}, and the fact that $\Gamma_{k}=\Gamma_{k-1}(1-\alpha_{k})$
as in \ref{eq:2.6} and that $\alpha_{k}\in(0,\ 1]$ for all $k\geq1$we
obtain 
\[
\ell(\theta_{k})\leq\ell(\theta_{k-1})-\lambda_{k}(1-L\lambda_{k})\left\Vert \nabla\ell(\theta_{k}^{md})+\epsilon_{k}\right\Vert ^{2}
\]
\[
+\frac{L\Gamma_{k-1}(1-\alpha_{k})^{2}}{2}\sum_{\tau=1}^{k-1}\frac{(\lambda_{\tau}-\beta_{\tau})^{2}}{\Gamma_{\tau}\alpha_{\tau}}\left\Vert \nabla\ell(\theta_{\tau}^{md})+\epsilon_{\tau}\right\Vert ^{2}+\lambda_{k}\left\Vert \epsilon_{k}\right\Vert \cdot\left\Vert \nabla\ell(\theta_{k}^{md})+\epsilon_{k}\right\Vert 
\]
\[
\leq\ell(\theta_{k-1})-\lambda_{k}(1-L\lambda_{k})\left\Vert \nabla\ell(\theta_{k}^{md})+\epsilon_{k}\right\Vert ^{2}
\]
\begin{equation}
+{\displaystyle \frac{L\Gamma_{k}}{2}\sum_{\tau=1}^{k}\frac{(\lambda_{\tau}-\beta_{\tau})^{2}}{\Gamma_{\tau}\alpha_{\tau}}\left\Vert \nabla\ell(\theta_{\tau}^{md})+\epsilon_{\tau}\right\Vert ^{2}+\lambda_{k}\left\Vert \epsilon_{k}\right\Vert \cdot\left\Vert \nabla\ell(\theta_{k}^{md})+\epsilon_{k}\right\Vert }\label{eq:2.18}
\end{equation}
for every $k\geq1$. Using the definition of $C_{k}$ in \ref{eq:2.7}
and summing up the above inequalities, we have 
\[
\ell(\theta_{N})\ \leq\ell(\theta_{0})-\sum_{k=1}^{N}\lambda_{k}(1\ -L\lambda_{k})\left\Vert \nabla\ell(\theta_{k}^{md})+\epsilon_{k}\right\Vert ^{2}
\]
\[
+\frac{L}{2}\sum_{k=1}^{N}\Gamma_{k}\sum_{\tau=1}^{k}\frac{(\lambda_{\tau}-\beta_{\tau})^{2}}{\Gamma_{\tau}\alpha_{\tau}}\left\Vert \nabla\ell(\theta_{\tau}^{m})+\epsilon_{k}\right\Vert ^{2}+\sum_{k=1}^{N}\lambda_{k}\epsilon_{k}\cdot\left\Vert \nabla\ell(\theta_{k}^{md})+\epsilon_{k}\right\Vert 
\]
\[
=\ell(\theta_{0})-\sum_{k=1}^{N}\lambda_{k}(1\ -L\lambda_{k})\left\Vert \nabla\ell(\theta_{k}^{md})+\epsilon_{k}\right\Vert ^{2}
\]
\[
+\frac{L}{2}\sum_{k=1}^{N}\frac{(\lambda_{k}-\beta_{k})^{2}}{\Gamma_{k}\alpha_{k}}(\sum_{\tau=k}^{N}\Gamma_{\tau})\Vert\nabla\ell(\theta_{k}^{md})+\epsilon_{k}\Vert^{2}+\sum_{k=1}^{N}\lambda_{k}\left\Vert \epsilon_{k}\right\Vert \cdot\left\Vert \nabla\ell(\theta_{k}^{md})+\epsilon_{k}\right\Vert 
\]
\begin{equation}
=\ell(\theta_{0})-{\displaystyle \sum\lambda_{k}C_{k}\left\Vert \nabla\ell(\theta_{k}^{md})+\epsilon_{k}\right\Vert ^{2}}+\sum_{k=1}^{N}\lambda_{k}\left\Vert \epsilon_{k}\right\Vert \cdot\left\Vert \nabla\ell(\theta_{k}^{md})+\epsilon_{k}\right\Vert \label{eq:2.19}
\end{equation}
Rearranging the terms in the above inequality 
\[
{\displaystyle \sum_{k=1}^{N}\lambda_{k}C_{k}\left\Vert \nabla\ell(\theta_{k}^{md})+\epsilon_{k}\right\Vert ^{2}\leq\ell(\theta_{0})-\ell(\theta^*)}+\sum_{k=1}^{N}\lambda_{k}\left\Vert \epsilon_{k}\right\Vert \cdot\left\Vert \nabla\ell(\theta_{k}^{md})+\epsilon_{k}\right\Vert 
\]
By assumption \ref{ass5-1} that $\left\Vert \nabla\ell(\cdot)\right\Vert $
and $\sum_{k=1}^{N}\lambda_{k}\left\Vert \epsilon_{k}\right\Vert $
are bounded. Since $\ell(\theta_{N})\geq\ell(\theta^*)$ and in view of the
assumption that $C_{k}>0$, we obtain for some constant $B$,
\[
{\displaystyle \mathrm{min_{k=1,.N}}\Vert\nabla\ell(\theta_{k}^{md})+\epsilon_{k}\Vert^{2}\leq\frac{\ell(\theta_{0})-\ell(\theta^*)+B}{\sum_{k=1}^{N}\lambda_{k}C_{k}}}
\]
which clearly implies \ref{eq:2.8}.

We now prove part b).

First, from L-Lipschitz-continuous gradient property \ref{eq:2.4},
we have 
\[
\ell(\theta_{k}^{ag})\leq\ell(\theta_{k}^{md})+\left\langle \nabla\ell(\theta_{k}^{md}),\ \theta_{k}^{ag}-\theta_{k}^{md}\right\rangle +\frac{L}{2}\left\Vert \theta_{k}^{ag}-\theta_{k}^{md}\right\Vert ^{2}
\]
\begin{equation}
\leq\ell(\theta_{k}^{md})-{\displaystyle \beta_{k}\left\Vert \nabla\ell(\theta_{k}^{md})+\epsilon_{k}\right\Vert ^{2}+\left\Vert \epsilon_{k}\right\Vert \beta_{k}\left\Vert \nabla\ell(\theta_{k}^{md})+\epsilon_{k}\right\Vert +\frac{L\beta_{k}^{2}}{2}\left\Vert \nabla\ell(\theta_{k}^{md})+\epsilon_{k}\right\Vert ^{2}.}\label{eq:2.20}
\end{equation}
By the assumption that $\ell(\cdot)$ is convex and \ref{eq:2.2},
\[
\ell(\theta_{k}^{md})-\left[(1-\alpha_{k})\ell(\theta_{k-1}^{ag})+\alpha_{k}\ell(\theta)\right]
\]
\[
=\alpha_{k}\left[\ell(\theta_{k}^{md})-\ell(\theta)\right]+(1-\alpha_{k})\left[\ell(\theta_{k}^{md})-\ell(\theta_{k-1}^{ag})\right]
\]
\[
\leq\alpha_{k}\left\langle \nabla\ell(\theta_{k}^{md}),\ \theta_{k}^{md}-\theta\right\rangle +(1-\alpha_{k})\left\langle \nabla\ell(\theta_{k}^{md}),\ \theta_{k}^{md}-\theta_{k-1}^{ag}\right\rangle 
\]
\[
=\langle\nabla\ell(\theta_{k}^{md}),\ \alpha_{k}(\theta_{k}^{md}-\theta)\ +(1-\alpha_{k})(\theta_{k}^{md}-\theta_{k-1}^{ag})\rangle
\]
\begin{equation}
=\alpha_{k}\left\langle \nabla\ell(\theta_{k}^{md}),\ \theta_{k-1}-\theta\right\rangle .\label{eq:2.21}
\end{equation}
From \ref{eq:2.3}, we have 
\[
\left\Vert \theta_{k}-\theta\right\Vert ^{2}=\left\Vert \theta_{k-1}-\lambda_{k}\widehat{\nabla\ell(\theta_{k}^{md})}-\theta\right\Vert ^{2}
\]
\[
=\left\Vert \theta_{k-1}-\theta\right\Vert ^{2}-2\lambda_{k}\langle\widehat{\nabla\ell(\theta_{k}^{md})},\ \theta_{k-1}-\theta\rangle+\lambda_{k}^{2}\left\Vert \widehat{\nabla\ell(\theta_{k}^{md})}\right\Vert ^{2},
\]
\[
=\left\Vert \theta_{k-1}-\theta\right\Vert ^{2}-2\lambda_{k}\langle\nabla\ell(\theta_{k}^{md})+\epsilon_{k},\ \theta_{k-1}-\theta\rangle+\lambda_{k}^{2}\left\Vert \nabla\ell(\theta_{k}^{md})+\epsilon_{k}\right\Vert ^{2},
\]
which implies 
\[
\alpha_{k}\left\langle \nabla\ell(\theta_{k}^{md})+\epsilon_{k},\ \theta_{k-1}-\theta\right\rangle \ =\frac{\alpha_{k}}{2\lambda_{k}}\left[\left\Vert \theta_{k-1}-\theta\right\Vert ^{2}-\left\Vert \theta_{k}-\theta\right\Vert ^{2}\right]
\]
\[
+{\displaystyle \frac{\alpha_{k}\lambda_{k}}{2}\left\Vert \nabla\ell(\theta_{k}^{md})+\epsilon_{k}\right\Vert ^{2}}.
\]
Hence we obtain 
\[
\alpha_{k}\left\langle \nabla\ell(\theta_{k}^{md}),\ \theta_{k-1}-\theta\right\rangle \ \leq\frac{\alpha_{k}}{2\lambda_{k}}\left[\left\Vert \theta_{k-1}-\theta\right\Vert ^{2}-\left\Vert \theta_{k}-\theta\right\Vert ^{2}\right]
\]
\begin{equation}
+{\displaystyle \frac{\alpha_{k}\lambda_{k}}{2}\left\Vert \nabla\ell(\theta_{k}^{md})+\epsilon_{k}\right\Vert ^{2}}+\alpha_{k}\left\Vert \epsilon_{k}\right\Vert \left\Vert \theta_{k-1}-\theta\right\Vert \label{eq:2.22}
\end{equation}
Using the results of \ref{eq:2.20}, \ref{eq:2.21}, and \ref{eq:2.22},
we get 
\[
{\displaystyle \ell(\theta_{k}^{ag})\leq(1-\alpha_{k})\ell(\theta_{k-1}^{ag})+\alpha_{k}\ell(\theta)+\frac{\alpha_{k}}{2\lambda_{k}}\left[\left\Vert \theta_{k-1}-\theta\right\Vert ^{2}-\left\Vert \theta_{k}-\theta\right\Vert ^{2}\right]+\alpha_{k}\left\Vert \epsilon_{k}\right\Vert \left\Vert \theta_{k-1}-\theta\right\Vert }
\]
\[
-\beta_{k}(1-\frac{L\beta_{k}}{2}-\frac{\alpha_{k}\lambda_{k}}{2\beta_{k}})\left\Vert \nabla\ell(\theta_{k}^{md})+\epsilon_{k}\right\Vert ^{2}+\left\Vert \epsilon_{k}\right\Vert \beta_{k}\left\Vert \nabla\ell(\theta_{k}^{md})+\epsilon_{k}\right\Vert 
\]
\[
\leq\ (1\ -\alpha_{k})\ell(\theta_{k-1}^{ag})+\alpha_{k}\ell(\theta)+\frac{\alpha_{k}}{2\lambda_{k}}\left[\left\Vert \theta_{k-1}-\theta\right\Vert ^{2}-\left\Vert \theta_{k}-\theta\right\Vert ^{2}\right]
\]
\begin{equation}
-{\displaystyle \frac{\beta_{k}}{2}(1\ -L\beta_{k})\left\Vert \nabla\ell(\theta_{k}^{md})+\epsilon_{k}\right\Vert ^{2}}+\left\Vert \epsilon_{k}\right\Vert \beta_{k}\left\Vert \nabla\ell(\theta_{k}^{md})+\epsilon_{k}\right\Vert +\alpha_{k}\left\Vert \epsilon_{k}\right\Vert \left\Vert \theta_{k-1}-\theta\right\Vert ,\label{eq:2.23}
\end{equation}
where the last inequality follows from the assumption in \ref{eq:2.9}.
Subtracting $\ell(\theta)$ from both sides of the above inequality
and using Lemma 1, we conclude that 
\[
{\displaystyle \ell(\theta_{N}^{ag})-\ell(\theta)\leq\Gamma_{N}\left[\sum_{k=1}^{N}\frac{\alpha_{k}}{2\lambda_{k}\Gamma_{k}}\left[\left\Vert \theta_{k-1}-\theta\right\Vert ^{2}-\left\Vert \theta_{k}-\theta\right\Vert ^{2}\right]\right.}
\]
\[
\left.-\sum_{k=1}^{N}\frac{\beta_{k}}{2\Gamma_{k}}(1-L\beta_{k})\left\Vert \nabla\ell(\theta_{k}^{md})+\epsilon_{k}\right\Vert ^{2}+\sum_{k=1}^{N}\frac{1}{\Gamma_{k}}\left[\left\Vert \epsilon_{k}\right\Vert \beta_{k}\left\Vert \nabla\ell(\theta_{k}^{md})+\epsilon_{k}\right\Vert +\alpha_{k}\left\Vert \epsilon_{k}\right\Vert \left\Vert \theta_{k-1}-\theta\right\Vert \right]\right]
\]
\[
\leq\Gamma_{N}\frac{\left\Vert \theta_{0}-\theta\right\Vert ^{2}}{2\lambda_{1}}-\Gamma_{N}\sum_{k=1}^{N}\frac{\beta_{k}}{2\Gamma_{k}}\ (1\ -L\beta_{k})\left\Vert \nabla\ell(\theta_{k}^{md})+\epsilon_{k}\right\Vert ^{2}
\]
\begin{equation}
+\Gamma_{N}\sum_{k=1}^{N}\frac{1}{\Gamma_{k}}\left[\left\Vert \epsilon_{k}\right\Vert \beta_{k}\left\Vert \nabla\ell(\theta_{k}^{md})+\epsilon_{k}\right\Vert +\alpha_{k}\left\Vert \epsilon_{k}\right\Vert \left\Vert \theta_{k-1}-\theta\right\Vert \right]\label{eq:2.24}
\end{equation}
for every $\theta\in\mathbb{R}^{n}.$ By our contruction \ref{eq:2.10}
that sequence $\left\{ \frac{\alpha_{k}}{\lambda_{k}\Gamma_{k}}\right\} $is
decreasing and the fact that $\alpha_{1}=\Gamma_{1}=1$, we have 
\begin{equation}
{\displaystyle \sum_{k=1}^{N}\frac{\alpha_{k}}{\lambda_{k}\Gamma_{k}}\left[\left\Vert \theta_{k-1}-\theta\right\Vert ^{2}-\left\Vert \theta_{k}-\theta\right\Vert ^{2}\right]\leq\frac{\alpha_{1}\left\Vert \theta_{0}-\theta\right\Vert ^{2}}{\lambda_{1}\Gamma_{1}}=\frac{\left\Vert \theta_{0}-\theta\right\Vert ^{2}}{\lambda_{1}}}\label{eq:2.25}
\end{equation}
which immediately implies the last inequality of \ref{eq:2.24}.

Hence, we can conclude \ref{eq:2.12} from the above inequality and
the assumption in \ref{eq:2.9}: 
\[
{\displaystyle \ell(\theta_{N}^{ag})-\ell(\theta^{*})\leq\Gamma_{N}\left[\frac{\left\Vert \theta_{0}-\theta^{*}\right\Vert ^{2}}{\lambda_{1}}+\sum_{k=1}^{N}\Gamma_{k}^{-1}\left[\left\Vert \epsilon_{k}\right\Vert \beta_{k}\left\Vert \nabla\ell(\theta_{k}^{md})+\epsilon_{k}\right\Vert +\alpha_{k}\left\Vert \epsilon_{k}\right\Vert \left\Vert \theta_{k-1}-\theta\right\Vert \right]\right]}
\]
Finally, noting the fact that $\ell(\theta_{N}^{ag})\geq\ell(\theta^{*})$,
substitute $\theta:=\theta^{*}$, re-arranging the terms in \ref{eq:2.24}
we obtain 
\[
{\displaystyle \sum_{k=1}^{N}\frac{\beta_{k}}{2\Gamma_{k}}(1\ -L\beta_{k})\left\Vert \nabla\ell(\theta_{k}^{md})+\epsilon_{k}\right\Vert ^{2}\mbox{\,\ }k=1,\ldots,N}
\]
\[
\leq\frac{\Vert\theta^{*}-\theta_{0}\Vert^{2}}{2\lambda_{1}}+\sum_{k=1}^{N}\frac{1}{\Gamma_{k}}\left[\left\Vert \epsilon_{k}\right\Vert \beta_{k}\left\Vert \nabla\ell(\theta_{k}^{md})+\epsilon_{k}\right\Vert +\alpha_{k}\left\Vert \epsilon_{k}\right\Vert \left\Vert \theta_{k-1}-\theta\right\Vert \right],
\]
or 
\[
\mathrm{min_{k=1,.N}}\Vert\nabla\ell(\theta_{k}^{md})+\epsilon_{k}{\displaystyle \Vert^{2}\leq2\frac{\frac{\Vert\theta^{*}-\theta_{0}\Vert^{2}}{2\lambda_{1}}+\sum_{k=1}^{N}\frac{1}{\Gamma_{k}}\left[\left\Vert \epsilon_{k}\right\Vert \beta_{k}\left\Vert \nabla\ell(\theta_{k}^{md})+\epsilon_{k}\right\Vert +\alpha_{k}\left\Vert \epsilon_{k}\right\Vert \left\Vert \theta_{k-1}-\theta\right\Vert \right]}{\sum_{k=1}^{N}\Gamma_{k}^{-1}\beta_{k}(1-L\beta_{k})}}
\]
which together with \ref{eq:2.9}, clearly imply \ref{eq:2.11}. \end{proof}

\subsection{Proof of Theorem 4}

\begin{proof} We first prove part a). Note that by choosing 
\[
\beta_{k}=\frac{1}{2L}
\]
\begin{equation}
{\displaystyle {\displaystyle \Gamma_{k}=\frac{1}{k^{1+\delta}}}},\label{eq:2.33}
\end{equation}
which implies that for sufficient large $k$ 
\[
\sum_{\tau=k}^{N}\Gamma_{\tau}=\sum_{\tau=k}^{N}\frac{1}{\tau^{1+\delta}}=O(\frac{1}{k^{\delta}})
\]
We also have 
\begin{equation}
1-\alpha_{k}=\frac{(k-1)^{1+\delta}}{k^{1+\delta}}\label{eq:2.34}
\end{equation}
for every $k>1,$ or $\alpha_{k}=\frac{\left(k^{1+\delta}-\left(k-1\right)^{1+\delta}\right)}{k^{1+\delta}}=O(\frac{\left(1+\delta\right)k^{\delta}}{k^{1+\delta}})=O(\frac{1}{k})$.
If we choose $\lambda_{k}$ such that $\lambda_{k}-\beta_{k}=o(k^{-1})$
then 
\[
\frac{(\lambda_{k}-\beta_{k})^{2}}{2\alpha_{k}\Gamma_{k}\lambda_{k}}(\sum_{\tau=k}^{N}\Gamma_{\tau})=\frac{o(k^{-2})}{k^{-1}k^{-(1+\delta)}}\frac{1}{k^{\delta}}=o(1)
\]
so for sufficiently large $k$ we have 
\[
C_{k}=1-L[\lambda_{k}+\frac{(\lambda_{k}-\beta_{k})^{2}}{2\alpha_{k}\Gamma_{k}\lambda_{k}}(\sum_{\tau=k}^{N}\Gamma_{\tau})]>\frac{1}{4}
\]
Hence, it can also be seen from \ref{eq:2.8} that for some positive
bounded constant $B_{2}$, 
\[
{\displaystyle \mathrm{min_{k=1,.N}}\left\Vert \nabla\ell(\theta_{k}^{md})+\epsilon_{k}\right\Vert ^{2}\leq\frac{\ell(\theta_{0})-\ell^{*}+B}{NB_{2}}=O(\frac{1}{N})},
\]
which concludes the first part of the proof. Since $\left\Vert \epsilon_{k}\right\Vert =O\left(\tau^{2}\right)\leq O(\frac{1}{k})$,
we have $\nabla\ell(\theta_{k}^{md})$ converge to $0$ at the rate
of 
\[
\min\left\{ O(\frac{1}{\sqrt{N}}),O\left(\left\Vert \epsilon_{k}\right\Vert \right)\right\} =O(\frac{1}{\sqrt{N}}),
\]
which gives us the desired result.

We now show part b). Let $\lambda_{k}=\left(k^{1+\delta}-\left(k-1\right)^{1+\delta}\right)c$
for some constant $c$ then 
\[
\frac{\alpha_{1}}{\lambda_{1}\Gamma_{1}}=\frac{\alpha_{2}}{\lambda_{2}\Gamma_{2}}=\cdots=\frac{\alpha_{k}}{\lambda_{k}\Gamma_{k}}.
\]
Observe that 
\[
\alpha_{k}\lambda_{k}=\frac{c^{2}\left(k^{1+\delta}-\left(k-1\right)^{1+\delta}\right)^{2}}{k^{1+\delta}}=\frac{c^{2}\left(1+\delta\right)^{2}O(k^{2\delta})}{k^{1+\delta}}\rightarrow0
\]
for $\delta<1$ so $\frac{\left(1+\delta\right)^{2}k^{2\delta}}{k^{1+\delta}}<\beta_{k}=\frac{1}{2L}$
for sufficient large $k$, which implies that conditions \ref{eq:2.9}
and \ref{eq:2.10} hold. Moreover, it can also be easily seen from
\ref{eq:2.15} that 
\[
\mathrm{min_{k=1,.N}}\Vert\nabla\ell(\theta_{k}^{md})+\epsilon_{k}{\displaystyle \Vert^{2}\leq\frac{\frac{\Vert\theta^{*}-\theta_{0}\Vert^{2}}{2\lambda_{1}}+C\sum_{k=1}^{N}\Gamma_{k}^{-1}\left[\left\Vert \epsilon_{k}\right\Vert +O(\frac{1}{k})\left\Vert \epsilon_{k}\right\Vert \right]}{\sum_{k=1}^{N}\Gamma_{k}^{-1}}}=O(N^{-2-\delta}).
\]
The last equality is due to the fact that $\sum_{k=1}^{N}\Gamma_{k}^{-1}=\sum_{k=1}^{N}k^{(1+\delta)}=O(N^{2+\delta})$.
Combining the above relation with \ref{eq:2.8}, and since $\left\Vert \epsilon_{k}\right\Vert =O\left(\tau^{2}\right)\leq O(\frac{1}{k^{2+\delta+\delta_{1}}})$
for some $\delta_{1}>0$, we have $\nabla\ell(\theta_{k}^{md})$ converge
to $0$ at the rate of $O\left(\sqrt{\frac{1}{N^{2+\delta}}}\right).$

Since $\alpha_{k}\lambda_{k}<\beta_{k}=\frac{1}{2L}$, we have $\delta\leq1$
which implies that the best convergence rate is $O(N^{-3})$. \end{proof}


\begin{thebibliography}{}

\bibitem[Andrieu et~al., 2010]{andrieu10}
Andrieu, C., Doucet, A., and Holenstein, R. (2010).
\newblock Particle {M}arkov chain {M}onte {C}arlo methods.
\newblock {\em Journal of the Royal Statistical Society: Series B (Statistical
  Methodology)}, 72(3):269--342.

\bibitem[Andrieu and Vihola, 2015]{andrieu2015convergence}
Andrieu, C. and Vihola, M. (2015).
\newblock Convergence properties of pseudo-marginal {Markov chain Monte Carlo}
  algorithms.
\newblock {\em The Annals of Applied Probability}, 25(2):1030--1077.

\bibitem[Bhadra, 2010]{bhadra10}
Bhadra, A. (2010).
\newblock Discussion of `particle {M}arkov chain {M}onte {C}arlo methods' by
  {C}.\ {A}ndrieu, {A}.\ {D}oucet and {R}.\ {H}olenstein.
\newblock {\em Journal of the Royal Statistical Society, Series B (Statistical
  Methodology)}, 72:314--315.

\bibitem[Bret{\'o}, 2014]{breto2014idiosyncratic}
Bret{\'o}, C. (2014).
\newblock On idiosyncratic stochasticity of financial leverage effects.
\newblock {\em Statistics \& Probability Letters}, 91:20--26.

\bibitem[Bret\'{o} and Ionides, 2011]{breto11}
Bret\'{o}, C. and Ionides, E.~L. (2011).
\newblock Compound {M}arkov counting processes and their applications to
  modeling infinitesimally over-dispersed systems.
\newblock {\em Stoch. Process. Their Appl.}, 121(11):2571--2591.

\bibitem[Dahlin et~al., 2015]{DahlinLindstenSchon2015a}
Dahlin, J., Lindsten, F., and Sch\"{o}n, T.~B. (2015).
\newblock {Particle Metropolis-Hastings using gradient and Hessian
  information}.
\newblock {\em Statistics and Computing}, 25(1):81--92.

\bibitem[Doucet et~al., 2002]{doucet02}
Doucet, A., Godsill, S.~J., and Robert, C.~P. (2002).
\newblock Marginal maximum a posteriori estimation using {M}arkov chain {M}onte
  {C}arlo.
\newblock {\em Statistics and Computing}, 12:77--84.

\bibitem[Doucet et~al., 2013]{doucet2013derivative}
Doucet, A., Jacob, P.~E., and Rubenthaler, S. (2013).
\newblock Derivative-free estimation of the score vector and observed
  information matrix with application to state-space models.
\newblock {\em ArXiv:1304.5768}.

\bibitem[Eddelbuettel et~al., 2011]{eddelbuettel2011rcpp}
Eddelbuettel, D., Fran{\c{c}}ois, R., Allaire, J., Ushey, K., Kou, Q., Russel,
  N., Chambers, J., and Bates, D. (2011).
\newblock Rcpp: Seamless r and c++ integration.
\newblock {\em Journal of Statistical Software}, 40(8):1--18.

\bibitem[Gaetan and Yao, 2003]{gaetan03}
Gaetan, C. and Yao, J.-F. (2003).
\newblock A multiple-imputation {M}etropolis version of the {EM} algorithm.
\newblock {\em Biometrika}, 90(3):643--654.

\bibitem[Ghadimi and Lan, 2016]{ghadimi2016accelerated}
Ghadimi, S. and Lan, G. (2016).
\newblock Accelerated gradient methods for nonconvex nonlinear and stochastic
  programming.
\newblock {\em Mathematical Programming}, 156(1-2):59--99.

\bibitem[Girolami and Calderhead, 2011]{girolami2011riemann}
Girolami, M. and Calderhead, B. (2011).
\newblock {Riemann manifold Langevin and Hamiltonian Monte Carlo methods}.
\newblock {\em Journal of the Royal Statistical Society: Series B (Statistical
  Methodology)}, 73(2):123--214.

\bibitem[Ionides et~al., 2011]{ionides11}
Ionides, E.~L., Bhadra, A., Atchad{\'e}, Y., and King, A. (2011).
\newblock Iterated filtering.
\newblock {\em Annals of Statistics}, 39:1776--1802.

\bibitem[Ionides et~al., 2006]{ionides06-pnas}
Ionides, E.~L., Bret{\'o}, C., and King, A.~A. (2006).
\newblock Inference for nonlinear dynamical systems.
\newblock {\em Proceedings of the National Academy of Sciences of the USA},
  103:18438--18443.

\bibitem[Ionides et~al., 2015]{ionides15}
Ionides, E.~L., Nguyen, D., Atchad{\'e}, Y., Stoev, S., and King, A.~A. (2015).
\newblock Inference for dynamic and latent variable models via iterated,
  perturbed {B}ayes maps.
\newblock {\em Proceedings of the National Academy of Sciences of the USA},
  112(3):719--724.

\bibitem[Jacquier et~al., 2007]{jacquier07}
Jacquier, E., Johannes, M., and Polson, N. (2007).
\newblock {MCMC} maximum likelihood for latent state models.
\newblock {\em Journal of Econometrics}, 137(2):615--640.

\bibitem[King et~al., 2016]{king15pomp}
King, A.~A., Nguyen, D., and Ionides, E.~L. (2016).
\newblock Statistical inference for partially observed {M}arkov processes via
  the {R} package pomp.
\newblock {\em Journal of Statistical Software}, 69(12).

\bibitem[Kloeden and Platen, 1999]{kloeden99}
Kloeden, P.~E. and Platen, E. (1999).
\newblock {\em Numerical Soluion of Stochastic Differential Equations}.
\newblock Springer, New York, 3rd edition.

\bibitem[Kushner and Clark, 1978]{kushner78}
Kushner, H.~J. and Clark, D.~S. (1978).
\newblock {\em Stochastic Approximation Methods for Constrained and
  Unconstrained Systems}.
\newblock Springer-Verlag, New York.

\bibitem[Laneri et~al., 2010]{laneri10}
Laneri, K., Bhadra, A., Ionides, E.~L., Bouma, M., Dhiman, R.~C., Yadav, R.~S.,
  and Pascual, M. (2010).
\newblock Forcing versus feedback: Epidemic malaria and monsoon rains in
  {N}orthwest {I}ndia.
\newblock {\em PLoS Computational Biology}, 6(9):e1000898.

\bibitem[Leggetter and Woodland, 1995]{leggetter1995maximum}
Leggetter, C.~J. and Woodland, P.~C. (1995).
\newblock Maximum likelihood linear regression for speaker adaptation of
  continuous density hidden {M}arkov models.
\newblock {\em Computer Speech \& Language}, 9(2):171--185.

\bibitem[Lele et~al., 2007]{lele07}
Lele, S.~R., Dennis, B., and Lutscher, F. (2007).
\newblock Data cloning: Easy maximum likelihood estimation for complex
  ecological models using {B}ayesian {M}arkov chain {M}onte {C}arlo methods.
\newblock {\em Ecology Letters}, 10(7):551--563.

\bibitem[Lindstr{\"o}m, 2013]{lindstrom2013tuned}
Lindstr{\"o}m, E. (2013).
\newblock Tuned iterated filtering.
\newblock {\em Statistics \& Probability Letters}, 83(9):2077--2080.

\bibitem[Lindstr\"{o}m et~al., 2012]{lindstrom12}
Lindstr\"{o}m, E., Ionides, E.~L., Frydendall, J., and Madsen, H. (2012).
\newblock Efficient iterated filtering.
\newblock In {\em 16th IFAC Symposium on System Identification}.

\bibitem[Nemeth et~al., 2013]{nemeth2013particle}
Nemeth, C., Fearnhead, P., and Mihaylova, L. (2013).
\newblock Particle approximations of the score and observed information matrix
  for parameter estimation in state space models with linear computational
  cost.
\newblock {\em ArXiv:1306.0735}.

\bibitem[Nemeth et~al., 2014]{nemeth2014sequential}
Nemeth, C., Fearnhead, P., and Mihaylova, L. (2014).
\newblock Sequential monte carlo methods for state and parameter estimation in
  abruptly changing environments.
\newblock {\em IEEE Transactions on Signal Processing}, 62(5):1245--1255.

\bibitem[Nesterov, 2005]{nesterov2005smooth}
Nesterov, Y. (2005).
\newblock Smooth minimization of non-smooth functions.
\newblock {\em Mathematical Programming}, 103(1):127--152.

\bibitem[Nesterov, 2013]{nesterov2013introductory}
Nesterov, Y. (2013).
\newblock {\em Introductory Lectures on Convex Optimization: A Basic Course},
  volume~87.
\newblock Springer Science \& Business Media.

\bibitem[Nguyen, 2016]{nguyen2016another}
Nguyen, D. (2016).
\newblock {Another look at Bayes map iterated filtering}.
\newblock {\em Statistics \& Probability Letters}.

\bibitem[Nguyen and Ionides, 2017]{nguyenis215}
Nguyen, D. and Ionides, E.~L. (2017).
\newblock A second-order iterated smoothing algorithm.
\newblock {\em Statistical Computing}.

\bibitem[Pitt et~al., 2012]{pitt2012some}
Pitt, M.~K., dos Santos~Silva, R., Giordani, P., and Kohn, R. (2012).
\newblock {On some properties of Markov chain Monte Carlo simulation methods
  based on the particle filter}.
\newblock {\em Journal of Econometrics}, 171(2):134--151.

\bibitem[Poyiadjis et~al., 2011]{Poyiadjis-etal:2009}
Poyiadjis, G., Doucet, A., and Singh, S.~S. (2011).
\newblock Particle approximations of the score and observed information matrix
  in state space models with application to parameter estimation.
\newblock {\em Biometrika}, 98(1):65--80.

\bibitem[{R Core Team}, 2013]{amanual}
{R Core Team} (2013).
\newblock {\em R: A Language and Environment for Statistical Computing}.
\newblock R Foundation for Statistical Computing, Vienna, Austria.

\bibitem[Roy et~al., 2013]{roy12}
Roy, M., Bouma, M.~J., Ionides, E.~L., Dhiman, R.~C., and Pascual, M. (2013).
\newblock The potential elimination of plasmodium vivax malaria by relapse
  treatment: {I}nsights from a transmission model and surveillance data from
  {NW} {I}ndia.
\newblock {\em PLoS Neglected Tropical Diseases}, 7(1):e1979.

\end{thebibliography}



\section*{Acknowledgements}
This research was funded in part by National Science Foundation grant DMS-1308919.

\bibliographystyle{apalike}

\end{document}